\documentclass[11pt,english]{article}
\usepackage{mathptmx}
\usepackage[T1]{fontenc}
\usepackage[latin9]{inputenc}
\setcounter{secnumdepth}{2}
\setcounter{tocdepth}{2}
\usepackage{float}
\usepackage{mathtools}
\usepackage{enumitem}
\usepackage{amsmath}
\usepackage{amsthm}
\usepackage{amssymb}
\usepackage[a4paper]{geometry}
\geometry{verbose,tmargin=30mm,bmargin=30mm,lmargin=30mm,rmargin=30mm}
\usepackage{setspace}
\usepackage[authoryear]{natbib}
\setstretch{1.1}

\makeatletter
%%%%%%%%%%%%%%%%%%%%%%%%%%%%%%
\newlength{\lyxlabelwidth}      % auxiliary length 
% labeling-like list based on enumitem's description list with
% mandatory second argument (label-pattern):
	\newenvironment{elabeling}[2][]%
	{\settowidth{\lyxlabelwidth}{#2}
		\begin{description}[font=\normalfont,style=sameline,
			leftmargin=\lyxlabelwidth,#1]}
	{\end{description}}
\theoremstyle{remark}
\newtheorem*{rem*}{\protect\remarkname}
\theoremstyle{plain}
\newtheorem{lem}{\protect\lemmaname}
\theoremstyle{definition}
\newtheorem{defn}{\protect\definitionname}
\theoremstyle{plain}
\newtheorem{thm}{\protect\theoremname}
\theoremstyle{plain}
\newtheorem{prop}{\protect\propositionname}
\theoremstyle{plain}
\newtheorem{cor}{\protect\corollaryname}

%%%%%%%%%%%%%%%%%%%%%%%%%%

\DeclareMathOperator{\Var}{\mathrm{Var}}

\usepackage{algorithm,algpseudocode}

\usepackage{caption}

\numberwithin{equation}{section}

\setcounter{secnumdepth}{3}

\usepackage{enumitem}   
\setlist[itemize]{noitemsep,topsep=2pt}
% topsep=0pt
\setlist{noitemsep,topsep=2pt} 

\setlength{\abovecaptionskip}{4pt} 
\setlength{\belowcaptionskip}{-7pt} 
\setlength{\bibsep}{0pt plus 0.3ex} 

\usepackage[font=small,labelfont=bf]{caption}

\usepackage{enumitem}
\setlist[enumerate,1]{label=(\alph*),font=\normalfont}

\makeatother

\usepackage{babel}
\providecommand{\corollaryname}{Corollary}
\providecommand{\definitionname}{Definition}
\providecommand{\lemmaname}{Lemma}
\providecommand{\propositionname}{Proposition}
\providecommand{\remarkname}{Remark}
\providecommand{\theoremname}{Theorem}

\begin{document}
\title{Mean-Variance Portfolio Selection in Long-Term Investments with Unknown
Distribution: Online Estimation, Risk Aversion under Ambiguity, and
Universality of Algorithms}
\author{Duy Khanh Lam\thanks{I am grateful to my advisor, Giulio Bottazzi, and Daniele Giachini
at Scuola Superiore Sant'Anna for their discussions, comments, and
encouragement, which motivated the completion of this paper.}\\Scuola Normale Superiore\\ \\(Working paper, 2nd full draft)}
\maketitle
\begin{abstract}
The standard approach for constructing a Mean-Variance portfolio involves
estimating parameters for the model using collected samples. However,
since the distribution of future data may not resemble that of the
training set, the out-of-sample performance of the estimated portfolio
is worse than one derived with true parameters, which has prompted
several innovations for better estimation. Instead of treating the
data without a timing aspect as in the common training-backtest approach,
this paper adopts a perspective where data gradually and continuously
reveal over time. The original model is recast into an online learning
framework, which is free from any statistical assumptions, to propose
a dynamic strategy of sequential portfolios such that its empirical
utility, Sharpe ratio, and growth rate asymptotically achieve those
of the true portfolio, derived with perfect knowledge of the future
data.\smallskip

When the distribution of future data follows a normal shape, the growth
rate of wealth is shown to increase by lifting the portfolio along
the efficient frontier through the calibration of risk aversion. Since
risk aversion cannot be appropriately predetermined, another proposed
algorithm updates this coefficient over time, forming a dynamic strategy
that approaches the optimal empirical Sharpe ratio or growth rate
associated with the true coefficient. The performance of these proposed
strategies can be universally guaranteed under stationary stochastic
markets. Furthermore, in certain time-reversible stochastic markets,
the so-called Bayesian strategy utilizing true conditional distributions,
based on past market information during investment, does not perform
better than the proposed strategies in terms of empirical utility,
Sharpe ratio, or growth rate, which, in contrast, do not rely on conditional
distributions.\bigskip
\end{abstract}
\begin{center}
{\small\textit{Keywords}}{\small : Online Learning, Misspecification,
Universality, Mean-Variance Portfolio, Dynamic Strategy.}{\small\par}
\par\end{center}

%\vspace{1\baselineskip}

\pagebreak{}

\setlength{\abovedisplayskip}{2.5pt} 
\setlength{\abovedisplayshortskip}{2.5pt}
\setlength{\belowdisplayskip}{2.5pt} 
\setlength{\belowdisplayshortskip}{2.5pt} 

\section{Introduction}

In this paper, we investigate the problem of sequential portfolio
selection for long-term investment under an unknown distribution of
future market data. An investor (or a fund manager) utilizes the Mean-Variance
(M-V) model to construct a no-short constrained portfolio considering
personal risk aversion, then frequently rebalances the portfolios
at each time period to maintain the risk profile aligned with the
original M-V portfolio throughout the investment. An advantage of
such constant rebalancing, as opposed to a varying sequence of portfolios,
is that it allows the investor to control the risk target of the investment
much better. However, as future data are unforeseeable, the investor
has to use the collected historical data to estimate model parameters
corresponding to the sample distribution and expect the distribution
of out-of-sample data to be similar. Technically, in accordance with
the common assumption of an identically and independently distributed
(i.i.d.) market, since the size of sample data is often not large
and costly to collect, some resampling methods on the sample are typically
employed to imitate an infinite sample so that the estimations could
be close to the true parameters of the hypothetical distribution.
Unfortunately, this is not often the case in reality as the future
data are gradually and constantly generated, and their distribution
is unlikely to be similar to the sample's distribution. Furthermore,
backtest overfitting resulting from training-testing dataset selection
could degrade empirical performance, as discussed in \citet{Bailey2014,Bailey2021a,Bailey2021}.\smallskip

The M-V model, introduced by \citet*{Markowitz1952,Markowitz1959},
has been the cornerstone of modern portfolio theory, owing to its
capability of controlling risk; for instance, it is one of the primary
models among large insurance companies, as referred to in the survey
by \citet{Grundl2016}. Roughly speaking, given the investor's risk
aversion $\alpha$, which measures the relative preference of the
expected return against the risk of a portfolio, the M-V portfolio
is a vector of weights of capital allocated to some assets, denoted
by $b^{*}$, that maximizes the tradeoff between the expected value
and variance of returns $r\big(b\big)$ of a portfolio $b$, which
is perceived as the inherent risk, as:
\[
b^{*}=\operatorname*{argmax}_{b}\big(\mathbb{E}\big(r\big(b\big)\big)-\alpha\Var\big(r\big(b\big)\big),\alpha\geq0.\vspace{-0.5ex}
\]
The values $\mathbb{E}\big(r\big(b\big)\big)$ and $\Var\big(r\big(b\big)\big)$
are computed through the expected value $\mu$ and covariance matrix
$\Sigma$ of the assets' returns that are assumed to be known in the
original theory. In reality, $\mu$ and $\Sigma$ are unknown and
commonly estimated using collected historical data, which are seriously
prone to error and coined as the term \textit{``Markowitz optimization
enigma''} by \citet*{Michaud1989}. As a consequence, the misestimation
results in inadequate out-of-sample performance of the estimated M-V
portfolio, especially its empirical Sharpe ratio, as shown in several
articles such as \citet*{Elton2014}, \citet*{Chopra1993}, \citet{Simaan1997},
\citet{BrittenJones1999} , \citet*{Frankfurter1976}, \citet*{Jobson1980}.
Thus, the \textit{estimated efficient frontier}, as termed by \citet{Broadie1993}
to refer to the frontier corresponding to the estimators, is not reliable
for calibrating a risk aversion coefficient.\smallskip 

A natural solution that emerged is improving the estimation for the
M-V model, which has become an intriguing statistical problem in recent
decades. Several approaches have been proposed, such as Bayesian estimators,
which assume a prior distribution for parameters of i.i.d. normal
assets' returns, and shrinkage methods along with their variants,
as in \citet*{Jorion1986}, \citet*{Black1991}, \citet*{Ledoit2003,Ledoit2004},
\citet{Fabozzi2007}, \citet{Lai2011} and \citet{Meucci2010}, among
others, and bootstrap resampling methods in \citet{Michaud1989}.
Other alternatives include using factor models to reduce dimensions
in covariance matrix estimation, such as the capital asset pricing
model by \citet{Sharpe1964}, or shrinkage by combining with a naive
portfolio in an i.i.d. market to mitigate estimation error as discussed
in \citet{Tu2011} and \citet{Lehmann1998}, etc. Additionally, related
to the indetermination of risk aversion, an alternative concept of\textit{
ambiguity }and\textit{ ambiguity aversion} under an unknown distribution
was proposed in \citet*{Garlappi2007} and \citet{Boyle2012}. Nonetheless,
the out-of-sample performance of all these methods in long-term investment
still remains a question.\smallskip

\textbf{Paper organization and main results}. This paper contributes
a novel solution to the described challenge by treating the future
data as gradually and constantly observed over periods of the investment,
rather than as a whole available at once for training (estimation)
and testing (backtesting). In this so-called online learning framework,
at any point in the future, the investor observes a set of data and
perceives its empirical distribution, allowing for the determination
of an M-V portfolio with respect to the true model parameters. Therefore,
this online approach does not require knowledge or assumptions of
the true distribution. The organization of this paper along with the
respective obtained results is outlined as follows:\smallskip
\begin{elabeling}{00.00.0000}
\item [{\textit{Section~2}:}] Taking into account the changing and unknown
future data, an online framework is formalized, in which a dynamic
strategy of sequential portfolios, rather than an estimated M-V portfolio
using a single sample, should be made such that its empirical M-V
utility, Sharpe ratio, and growth rate could asymptotically approach
those of the true constant M-V strategy, which is a portfolio derived
from the M-V model using perfect knowledge of the distribution of
the whole future data.$\vspace{0.5ex}$
\item [{\textit{Section~3.1}:}] In a market without redundant assets,
an algorithm is proposed for dynamic strategy construction with a
predetermined risk aversion, which guarantees the same empirical M-V
utility, Sharpe ratio, and growth rate as the constant M-V strategy
derived using the true limiting distribution of any deterministic
sequences of market data.$\vspace{0.5ex}$
\item [{\textit{Section~3.2}:}] If the market data exhibits the shape
of a normal distribution over time, a positive relation between the
growth rate and the change in both expected return and Sharpe ratio
of the constant M-V strategy is established. This explains the increase
in growth rate and the cumulative wealth of the M-V strategy along
the efficient frontier with lower calibration of the risk aversion
coefficient.$\vspace{0.5ex}$
\item [{\textit{Section~3.3}:}] Since the proposed algorithm in Section
3.1 requires a predetermined risk aversion, which is ambiguous due
to the unknown distribution of future data, a second algorithm is
proposed on top of the original one for dynamic strategy construction
with updating risk aversion coefficients. This allows the strategy
to approach either the highest empirical Sharpe ratio or the highest
growth rate among a set of constant M-V strategies.$\vspace{0.5ex}$
\item [{\textit{Sections~4}:}] The consistency of the proposed algorithms
can be guaranteed in the stationary stochastic markets. Moreover,
under a particular time-reversible market process, the so-called Bayesian
strategy using the M-V portfolios derived with the true conditional
distributions, based on past market information at each period, yields
the asymptotic empirical M-V utility, Sharpe ratio, or growth rate
almost surely not higher than those of the proposed strategies, which
do not require knowing the conditional distributions.
\end{elabeling}

\section{Model settings and problem formalization}

Let us consider the investment in the stock market with $m\geq2$
risky assets over discrete time periods $n\in\mathbb{N}_{+}$. For
clarity, let the vector $p_{n}=\big(p_{n,1},...,p_{n,m}\big)$ denote
the prices of the $m$ assets at time $n$, where $p_{n,j}\in\mathbb{R}_{++}$
is the price of the $j$-th asset. Correspondingly, the real-valued
random vector $X_{n}\coloneqq\big(X_{n,1},...,X_{n,m}\big)\in\mathcal{M}$
represents the positive returns of the assets at time $n$, where
$\mathcal{M}\coloneqq\big\{\beta\in\mathbb{R}_{++}^{m}:\,\|\beta\|\leq M\big\}$
for a real value $M\gg0$, and the vector $x_{n}\coloneqq\big(x_{n,1},...,x_{n,m}\big)$
indicates its corresponding realizations, where $x_{n,j}\coloneqq p_{n,j}/p_{n-1,j}$
for the $j$-th asset. At a specific time $n+1$, let $x_{1}^{n}\coloneqq\left\{ x_{i}\right\} _{i=1}^{n}$
be shorthand for the sequence of past realizations of the sequence
$X_{1}^{n}\coloneqq\left\{ X_{i}\right\} _{i=1}^{n}$; thus, the infinite
sequence of random vectors and their corresponding realizations are
denoted by $X_{1}^{\infty}$ and $x_{1}^{\infty}$, respectively.
The space of no-short portfolios is denoted by the simplex $\mathcal{B}^{m}\coloneqq\big\{\beta\in\mathbb{R}_{+}^{m}:\,{\displaystyle {\textstyle \sum}}_{{\scriptstyle j=1}}^{{\scriptstyle m}}\beta_{j}=1\big\}$,
which implies a self-financed investment without external borrowing.
The return of a portfolio $b\in\mathcal{B}^{m}$ with respect to $x_{n}$
is denoted by $\left\langle b,x_{n}\right\rangle $, where $\left\langle \cdot,\cdot\right\rangle $
denotes the scalar product of two vectors. Here, the assumption that
$X_{n}$ is bounded in $\mathcal{M}$ is reasonable, due to the common-sense
notion that assets' returns are unlikely to experience sudden and
significant increases between consecutive time periods in the real
market.\smallskip

Since the investor can causally make decisions based on observations,
the portfolio selection at any time $n$ forms a function of past
realizations $x_{1}^{n-1}$ and is denoted by $b_{n}:\mathbb{R}_{++}^{m\times(n-1)}\to\mathcal{B}^{m}$.
Following these portfolios, let us denote a \textit{strategy} associated
with them by the infinite sequence $\big(b_{n}\big)\coloneqq\left\{ b_{n}\right\} _{n=1}^{\infty}$;
and if the strategy has a fixed portfolio $b$ over time, it is called
a \textit{constant strategy} and denoted by $\left(b\right)$ without
a time index in order to distinguish it from a dynamic one. Since
short selling is prohibited, the investor's capital cannot be entirely
depleted, which would otherwise halt the investment; therefore, we
can define a performance measure for a strategy at any time period.
Let the initial fortune be $S_{0}\big(b_{0}\big)\eqqcolon S_{0}=1$
by convention, and assume the portfolios are made without commission
fees for arbitrary fractions.\smallskip 

With the above settings, the \textit{cumulative wealth} and its corresponding
exponential\textit{ }\textit{\emph{average}}\textit{ growth rate}
after $n$ periods of investment yielded by a strategy $\left(b_{n}\right)$
are respectively defined as:
\[
S_{n}\left(b_{n}\right)\coloneqq{\displaystyle {\displaystyle \prod_{i=1}^{n}}\left\langle b_{i},x_{i}\right\rangle }\text{ and }W_{n}\left(b_{n}\right)\coloneqq\dfrac{1}{n}\log S_{n}\left(b_{n}\right)={\displaystyle \dfrac{1}{n}\sum_{i=1}^{n}\log\left\langle b_{i},x_{i}\right\rangle .}
\]
Here, the notations $S_{n}\left(b_{n}\right)$ and $W_{n}\left(b_{n}\right)$
serve as respective shorthands for $S_{n}\big(\left\{ b_{i}\right\} _{i=1}^{n},x_{1}^{n}\big)$
and $W_{n}\big(\left\{ b_{i}\right\} _{i=1}^{n},x_{1}^{n}\big)$,
given the sequence of realizations $x_{1}^{n}$. Using the same shorthand,
the \textit{empirical expected value}, \textit{empirical variance},
and \textit{empirical Sharpe ratio} of return of the strategy $\left(b_{n}\right)$,
with the return $r\geq0$ of a reference portfolio, after $n\geq2$
periods, are defined respectively as:
\[
M_{n}\left(b_{n}\right)\coloneqq\frac{1}{n}\sum_{i=1}^{n}\left\langle b_{i},x_{i}\right\rangle ,\text{ }V_{n}\left(b_{n}\right)\coloneqq\frac{1}{n}\sum_{j=1}^{n}\Big(\big<b_{j},x_{j}\big>-\frac{1}{n}\sum_{i=1}^{n}\big<b_{i},x_{i}\big>\Big)^{2}\text{ and }Sh_{n}\left(b_{n}\right)\coloneqq\frac{M_{n}\left(b_{n}\right)-r}{\sqrt{V_{n}\left(b_{n}\right)}}.
\]
However, since the value of $r$ does not affect the analysis in the
following sections, we set $r=0$ by default henceforth to simplify
notation in subsequent discussions.

\subsection{Empirical Mean-Variance utility of a strategy and its consistency}

We consider a scenario where an investor constructs an M-V portfolio
with a predetermined risk aversion coefficient, implying a desired
personal risk profile, for long-term investment. The investor is assumed
to have an immutable attitude about the expected return-risk trade-off
for each distribution of asset returns. In other words, if one specific
distribution is given at different periods, the investor always calibrates
the same risk aversion coefficient. Therefore, this coefficient remains
fixed over time during the investment if the investor believes in
a hypothetical data distribution. However, since a hypothesis about
the true distribution is doubtful and results in an ambiguous risk
aversion, the investor could adjust the risk aversion coefficients
of the M-V model over time by learning new market information rather
than maintaining an unshakable belief.\smallskip

Let's imagine an unsophisticated investor collects a sample of historical
data, then derives a constant M-V strategy, i.e., an M-V portfolio,
with estimations for the parameters of the expected values and covariance
matrices of assets' returns of the M-V model, given a predetermined
risk aversion $\alpha$ depending on the distribution of the sample.
By doing so, this investor believes that the out-of-sample data, i.e.,
the future realizations of assets' returns, would have the same distribution
as that of the sample; unfortunately, this is often not the case as
future data are unknown beforehand but gradually revealed while being
difficult to predict correctly. Instead, with the learning ability,
another sophisticated investor should construct a dynamic strategy
$\big(b_{n}\big)$, rather than a constant strategy one, so that the
strategy $\big(b_{n}\big)$ could achieve the empirical performance
of the constant M-V strategy $\big(b^{\alpha}\big)$ associated with
a risk aversion $\alpha$, as if $\big(b^{\alpha}\big)$ is derived
using the M-V model with perfect knowledge of the distributions of
future assets' returns.\smallskip

In specific, given an infinite sequence $x_{1}^{\infty}$ as data
of all assets' returns realizations, let's define the corresponding
sequence of \textit{empirical distributions} $\big\{ P_{n}\big\}_{n=1}^{\infty}$,
with each $P_{n}$ denoting the empirical distribution of the assets'
returns variable $X$ at time $n$, as follows:
\[
P_{n}\left(D\right)\coloneqq\dfrac{1}{n}\sum_{i=1}^{n}\mathbb{I}_{x_{i}}\left(D\right),\,\forall D\subseteq\mathcal{M},\forall n,
\]
where $\mathbb{I}{}_{x}(\cdot)$ denotes the indicator function. Notably,
the empirical distribution does not require presumed statistical properties
by its definition. Besides, let's define further the following function
$\mathcal{L}\big(\alpha,b,\mu,\Sigma\big)$ as the M-V utility of
a portfolio $b$ in the simplex $\mathcal{B}^{m}$ given two parameters
of moments $\mu^{Q},\Sigma^{Q}$, corresponding to a distribution
$Q$, and a risk aversion coefficient $\alpha$:
\begin{equation}
\mathcal{L}\big(\alpha,b,\mu^{Q},\Sigma^{Q}\big)\coloneqq\mathbb{E}^{Q}\big(\big<b,X\big>\big)-\alpha\Var^{Q}\big(\big<b,X\big>\big)=\big<b,\mu^{Q}\big>-\alpha\big<b,\Sigma^{Q}b\big>,\,\alpha\geq0.\label{M-V model}
\end{equation}
From here on, the risk aversion coefficient is denoted by the non-negative
real value $\alpha$.\smallskip

With the above settings, for an infinite sequence $x_{1}^{\infty}$
and a corresponding risk aversion coefficient $\alpha\big(x_{1}^{\infty}\big)$
determined by the investor, the \textit{empirical M-V utility} $M_{n}\left(b_{n}\right)-\alpha\big(x_{1}^{\infty}\big)V_{n}\left(b_{n}\right)$
of a dynamic strategy $\big(b_{n}\big)$ is formally said to be\emph{
}\textit{\emph{(asymptotically)}}\textit{ consistent} with the utility
of the constant M-V strategies derived using the knowledge of the
distribution of $x_{1}^{\infty}$ if\footnote{The consistency (\ref{consistency}) is formalized with respect to
the standard online learning objective in the repeated game theoretical
frameworks such as \textit{sequential predictions} in \citet{CesaBianchi1997,CesaBianchi1999,CesaBianchi2006}
and \textit{online convex optimization} or \textit{online portfolio
optimization} as in \citet{ShalevShwartz2012}, \citet{Ordentlich1996},
\citet{Helmbold1998}, \citet{Blum1999}, \citet{Fostera1999},\textbf{
}\citet{Vovk1998}, \citet{Erven2020}, etc. However, thus far, this
framework has only considered the logarithmic utility, i.e., the growth
rate, as the objective for the portfolio selection problem. In contrast,
this paper proposes the M-V utility as the primary objective, and
even simultaneously incorporates the Sharpe ratio and growth rate.}:
\begin{align}
\lim_{n\to\infty}\min_{b\in\mathcal{B}^{m}}\big(M_{n}\big(b_{n}\big)-\alpha\big(x_{1}^{\infty}\big)V_{n}\big(b_{n}\big)-\mathcal{L}\big(\alpha\big(x_{1}^{\infty}\big),b,\mu^{P_{n}},\Sigma^{P_{n}}\big)\big)=0.\label{consistency}
\end{align}
Here, noting that the notation $\alpha\big(x_{1}^{\infty}\big)$ implies
that the true coefficient varies depending on each instance of a sequence
$x_{1}^{\infty}$, which is only determined if the whole sequence
is known to the investor. In addition, since the investor is concerned
with the empirical distribution of the data, all sequences that differ
merely by permutations of their values share a common coefficient.\smallskip
\begin{rem*}
The formalized consistency of the empirical M-V utility of a dynamic
strategy in (\ref{consistency}) does not necessarily imply that the
dynamic strategy will also achieve the other empirical performances
of the constant M-V strategy. As mentioned earlier, an investor adopts
the M-V model for its risk control capability, aiming to create an
M-V portfolio with a desired risk profile. Therefore, a dynamic strategy
is more appreciated if not only its empirical M-V utility is consistent
according to (\ref{consistency}), but also its empirical Sharpe ratio
and growth rate asymptotically approach those of the desired constant
M-V strategy as time progresses.
\end{rem*}

\section{Strategies and risk aversion calibration in deterministic market }

In this section, we propose algorithms to construct a strategy that
does not need forecasting but uses solely available data at each time,
ensuring consistency in empirical M-V utility, as in (\ref{consistency}),
and even achieves the desired empirical Sharpe ratio and growth rate
of the constant M-V strategy, which is derived knowing the whole infinite
sequence $x_{1}^{\infty}$. Furthermore, we also discuss the determination
of risk aversion under the ambiguity of the future distribution of
realizations of assets' returns, as well as the potential impact of
risk aversion calibration on the empirical performances of the proposed
strategies. For simplification, we intentionally treat the infinite
sequence of realizations as deterministic but unknown data throughout
the section.\smallskip

With the notation of the function $\mathcal{L}\big(\alpha,b,\mu^{Q},\Sigma^{Q}\big)$
in (\ref{M-V model}), which represents the M-V utility of a portfolio
$b$ with the parameters $\mu^{Q},\Sigma^{Q}$ corresponding to a
distribution $Q$ and a risk aversion $\alpha$, let us define the
corresponding set of M-V portfolios $B\left(\alpha,Q\right)$ as follows:
\begin{equation}
B\left(\alpha,Q\right)\coloneqq\big\{ b^{*}\in\mathcal{B}^{m}:\,\mathcal{L}\big(\alpha,b^{*},\mu^{Q},\Sigma^{Q}\big)\geq\mathcal{L}\big(\alpha,b,\mu^{Q},\Sigma^{Q}\big),\forall b\in\mathcal{B}^{m}\big\}.\label{Set B_n}
\end{equation}
Let $\mathcal{P}$ represent the space of all probability distributions
on the variable of assets' returns $X$. It is worth noting that since
the assets' returns are bounded, as $X\in\mathcal{M}$ as assumed
in the section on model settings, the first and second moments of
the variable $X$ corresponding to any distribution $Q\in\mathcal{P}$
are well defined, as is the M-V utility. The following Lemma \ref{Lemma 1}
establishes general properties of the set of optimizers $B\left(\alpha,Q\right)$
and the corresponding optimum $\max_{b\in\mathcal{B}^{m}}\mathcal{L}\big(\alpha,b,\mu^{Q},\Sigma^{Q}\big)$
in relation to the distribution $Q$ within a metrizable space $\mathcal{P}$,
which is useful for constructing the subsequent potential dynamic
strategy.\smallskip
\begin{lem}
Consider the M-V utility $\mathcal{L}\big(\alpha,b,\mu^{Q},\Sigma^{Q}\big)$
and the related set $B\left(\alpha,Q\right)$ of M-V portfolios corresponding
to a distribution of assets' returns $Q\in\mathcal{P}$. Given an
$\alpha$, we have:\label{Lemma 1}
\begin{enumerate}
\item The set $B\left(\alpha,Q\right)$ is non-empty, compact, and convex
for any $Q\in\mathcal{P}$.
\item For any sequence $\big\{ Q_{n}\big\}_{n=1}^{\infty}$ such that $Q_{n}\to Q_{\infty}$
weakly, the Hausdorff distance between the two sets $d_{H}\left(B\left(\alpha,Q_{n}\right),B\left(\alpha,Q_{\infty}\right)\right)\to0$
as $n\to\infty$, where: 
\[
d_{H}\left(B\big(\alpha,Q_{n}\big),B\big(\alpha,Q_{\infty}\big)\right)\coloneqq\max\Big\{\sup_{\bar{b}\in B(\alpha,Q_{n})}d\big(\bar{b},B\big(\alpha,Q_{\infty}\big)\big),\sup_{\hat{b}\in B(\alpha,Q_{\infty})}d\big(B\big(\alpha,Q_{n}\big),\hat{b}\big)\Big\},\vspace{-1ex}
\]
with $d\left(b,B\right)\coloneqq\inf_{\ddot{b}\in B}\left\Vert b-\ddot{b}\right\Vert $
for $B\subseteq\mathcal{B}^{m}$.
\item The function $\max_{b\in\mathcal{B}^{m}}\mathcal{L}\big(\alpha,b,\mu^{Q},\Sigma^{Q}\big)$
is continuous in $Q$ when the space $\mathcal{P}$ is equipped with
the weak topology. Consequently, the set $B\left(\alpha,Q\right)$
is closed in $\mathcal{B}^{m}\times\mathcal{P}$, and the function
of the optimal selector $b^{*}\left(\alpha,Q\right):\mathcal{P}\to\mathcal{B}^{m}$,
which satisfies $b^{*}\left(\alpha,Q\right)\in B\left(\alpha,Q\right)$,
is continuous at any $Q$ for which $B\left(\alpha,Q\right)$ is a
singleton set.
\end{enumerate}
\end{lem}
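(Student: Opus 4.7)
The proof of Lemma~\ref{Lemma 1} is a parametric extremum argument with parameter space $(\mathcal{P},\text{weak topology})$. The key enabling observation is that because $X$ is supported in the bounded set $\mathcal{M}$, each coordinate $X_j$ and pairwise product $X_j X_k$ is a bounded continuous function on $\mathcal{M}$. Hence weak convergence $Q_n\to Q_\infty$ in $\mathcal{P}$ yields entry-wise convergence $\mu^{Q_n}\to\mu^{Q_\infty}$ and $\Sigma^{Q_n}\to\Sigma^{Q_\infty}$, which in turn gives joint continuity of $(b,Q)\mapsto\mathcal{L}(\alpha,b,\mu^Q,\Sigma^Q)=\langle b,\mu^Q\rangle-\alpha\langle b,\Sigma^Q b\rangle$ on the compact space $\mathcal{B}^m\times\mathcal{P}$. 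All three parts rest on this preliminary.

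For part (a), $b\mapsto\mathcal{L}(\alpha,b,\mu^Q,\Sigma^Q)$ is continuous on the compact convex simplex $\mathcal{B}^m$, so $B(\alpha,Q)$ is non-empty and compact by Weierstrass. Convexity follows from concavity of the objective in $b$: the affine term is trivially concave, and $-\alpha\langle b,\Sigma^Q b\rangle$ is concave because $\Sigma^Q$ is positive semi-definite. For part (c), I would invoke Berge's maximum theorem with the constant correspondence $Q\mapsto\mathcal{B}^m$ and the jointly continuous objective: this directly gives continuity of $V(Q):=\max_{b\in\mathcal{B}^m}\mathcal{L}(\alpha,b,\mu^Q,\Sigma^Q)$ and upper hemicontinuity of the argmax correspondence $Q\mapsto B(\alpha,Q)$. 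Upper hemicontinuity on a compact space is closedness of the graph, which is exactly closedness of $B$ in $\mathcal{B}^m\times\mathcal{P}$. When $B(\alpha,Q)=\{b^*(\alpha,Q)\}$ is a singleton, selector continuity follows from a standard subsequence argument: every accumulation point of $b^*(\alpha,Q_n)$ must lie in the singleton $B(\alpha,Q)$, so the whole sequence converges.

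Part (b) is the main obstacle. Upper hemicontinuity immediately yields the outer Hausdorff direction $\sup_{\bar b\in B(\alpha,Q_n)}d(\bar b,B(\alpha,Q_\infty))\to 0$ by contradiction and sequential compactness of $\mathcal{B}^m$: any subsequence bounded away from $B(\alpha,Q_\infty)$ admits a convergent sub-subsequence whose limit must, by joint continuity plus $V(Q_n)\to V(Q_\infty)$, lie in $B(\alpha,Q_\infty)$. The inner direction $\sup_{\hat b\in B(\alpha,Q_\infty)}d(B(\alpha,Q_n),\hat b)\to 0$ does not follow from upper hemicontinuity alone, since a fixed $\hat b\in B(\alpha,Q_\infty)$ is in general only a near-maximizer for the perturbed problem. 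The plan is to exploit the affine-plus-concave-quadratic structure of $\mathcal{L}$: the suboptimality gap $\varepsilon_n:=V(Q_n)-\mathcal{L}(\alpha,\hat b,\mu^{Q_n},\Sigma^{Q_n})$ tends to zero uniformly in $\hat b\in B(\alpha,Q_\infty)$ by joint continuity and compactness, and the concave-quadratic structure should allow converting this gap into a distance bound from $\hat b$ to the compact convex set $B(\alpha,Q_n)$ via a projection-along-a-chord argument. Turning a small objective gap into a small distance when the quadratic form is merely positive semi-definite is the step I expect to be most delicate; if it cannot be fully closed in this generality, the clean statement may implicitly rely on the non-degeneracy of $\Sigma^Q$ that appears explicitly in the no-redundant-assets setting of Section~3.1.
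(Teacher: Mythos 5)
Parts (a) and (c) of your proposal are correct. For (c) you take a genuinely different route from the paper: the paper obtains continuity of the value function and closedness of the graph as consequences of the Hausdorff convergence in (b), whereas you get them directly from Berge's maximum theorem applied to the jointly continuous objective on $\mathcal{B}^{m}\times\mathcal{P}$ with the constant feasible correspondence $Q\mapsto\mathcal{B}^{m}$. Your route is cleaner and, importantly, does not lean on part (b) at all, which matters in light of what follows. Your treatment of the outer direction of (b) via upper hemicontinuity and a subsequence extraction is also sound and matches the paper's contradiction argument in substance.

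The gap you flag in the inner direction of (b) is real, and your closing suspicion is exactly the right diagnosis: the step cannot be closed when $\Sigma^{Q_{\infty}}$ is only positive semi-definite, because the assertion itself fails there. Take $m=2$ and let $Q_{\infty}$ be the point mass at $(1,1)$, so $\Sigma^{Q_{\infty}}=0$, the objective is identically $1$ on $\mathcal{B}^{2}$, and $B(\alpha,Q_{\infty})=\mathcal{B}^{2}$. Let $Q_{n}$ place mass $1-1/n$ at $(1,1)$ and mass $1/n$ at $(2,1)$; then $Q_{n}\to Q_{\infty}$ weakly, the objective becomes $1+b_{1}/n-\alpha b_{1}^{2}(1/n)(1-1/n)$, and for $\alpha=1$ the maximizer is the single point with $b_{1}=1/(2(1-1/n))\to1/2$. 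Hence $\sup_{\hat{b}\in B(\alpha,Q_{\infty})}d\big(B(\alpha,Q_{n}),\hat{b}\big)\geq\|(1,0)-(1/2,1/2)\|$ stays bounded away from zero. No chord-projection argument can convert the vanishing suboptimality gap into a distance bound here, precisely because the curvature of the perturbed objectives degenerates as $n\to\infty$. Note that the paper's own proof of this direction commits the error you anticipated: it asserts that arbitrarily small $\epsilon_{n^{N}}^{*}$ ``ensures the proximity of $\ddot{b}_{n}$ to the set of maximizers $B(\alpha,Q_{n^{N}})$,'' which is exactly the unjustified gap-to-distance step. The statement is salvageable in the only regime in which it is used downstream: when $\Sigma^{P_{\infty}}$ is positive definite (no redundant assets, as in Theorem \ref{Theorem 1}), $B(\alpha,Q_{\infty})$ is a singleton $\{\hat{b}\}$, the inner supremum reduces to $\inf_{\bar{b}\in B(\alpha,Q_{n})}\|\bar{b}-\hat{b}\|$, and this is dominated by the outer supremum you already control. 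So under that hypothesis your argument is complete, and on the degenerate case your caution is better placed than the paper's confidence.
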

\begin{proof}
Since the quadratic objective function $\mathcal{L}\big(\alpha,b,\mu^{Q},\Sigma^{Q}\big)$
is concave over the convex and compact feasible set $\mathcal{B}^{m}$,
the respective set of the M-V portfolios $B\left(\alpha,Q\right)$
is a non-empty and convex subset of $\mathcal{B}^{m}$. Moreover,
for any sequence $\left\{ b_{n}\right\} _{n=1}^{\infty}$ such that
$b_{n}\to b_{\infty}$ and $b_{n}\in B\left(\alpha,Q\right)$ for
all $n$, we also have $b_{\infty}\in B\left(\alpha,Q\right)$ due
to the continuity of $\mathcal{L}\big(\alpha,b,\mu^{Q},\Sigma^{Q}\big)$
in $b\in\mathcal{B}^{m}$ given $\alpha$, $\mu^{Q}$, and $\Sigma^{Q}$,
as follows:
\begin{align*}
 & \mathcal{L}\big(\alpha,b_{n},\mu^{Q},\Sigma^{Q}\big)\geq\mathcal{L}\big(\alpha,b,\mu^{Q},\Sigma^{Q}\big),\forall b\in\mathcal{B}^{m},\forall n,\\
\Rightarrow & \lim_{n\to\infty}\mathcal{L}\big(\alpha,b_{n},\mu^{Q},\Sigma^{Q}\big)=\mathcal{L}\big(\alpha,b_{\infty},\mu^{Q},\Sigma^{Q}\big)\geq\mathcal{L}\big(\alpha,b,\mu^{Q},\Sigma^{Q}\big),\forall b\in\mathcal{B}^{m}.
\end{align*}
Hence, $B\left(\alpha,Q\right)$ is compact as it is closed and bounded
in $\mathbb{R}^{m}$, which proves (a).\smallskip

In order to prove assertion (b), we first show that $\sup_{\bar{b}\in B(\alpha,Q_{n})}d\big(\bar{b},B\big(\alpha,Q_{\infty}\big)\big)\to0$
by proving $d\big(\bar{b}_{n},B\big(\alpha,Q_{\infty}\big)\big)=\inf_{\ddot{b}\in B(\alpha,Q_{\infty})}\left\Vert \bar{b}_{n}-\ddot{b}\right\Vert \to0$,
as $Q_{n}\to Q_{\infty}$ weakly, for any sequence $\left\{ \bar{b}_{n}\right\} _{n=1}^{\infty}$
such that $\bar{b}_{n}\in B\left(\alpha,Q_{n}\right)$ for all $n$
(noting that the infimum attains a minimum due to the compactness
of the non-empty set $B\big(\alpha,Q_{\infty}\big)$, as stated in
assertion (a), which implies the existence of a minimizer). Assume
this is not true, so there is at least one sequence $\left\{ \bar{b}_{n}^{*}\right\} _{n=1}^{\infty}$
with $\bar{b}_{n}^{*}\in B\left(\alpha,Q_{n}\right)$ for all $n$,
such that there exists an $\epsilon^{*}>0$ and an $n^{N}>N$ for
any $N$, satisfying the following inequalities for $\bar{b}_{n^{N}}^{*}\in B\left(\alpha,Q_{n^{N}}\right)$
and any fixed choice of $\ddot{b}\in B\big(\alpha,Q_{\infty}\big)$:
\begin{equation}
\begin{cases}
\mathcal{L}\big(\alpha,\bar{b}_{n^{N}}^{*},\mu^{Q_{n^{N}}},\Sigma^{Q_{n^{N}}}\big)-\mathcal{L}\big(\alpha,\ddot{b},\mu^{Q_{n^{N}}},\Sigma^{Q_{n^{N}}}\big) & \geq0,\\
\mathcal{L}\big(\alpha,\ddot{b},\mu^{Q_{\infty}},\Sigma^{Q_{\infty}}\big)-\mathcal{L}\big(\alpha,\bar{b}_{n^{N}}^{*},\mu^{Q_{\infty}},\Sigma^{Q_{\infty}}\big) & \geq\epsilon^{*},
\end{cases}\label{equality proof}
\end{equation}
because $\inf_{\ddot{b}\in B(\alpha,Q_{\infty})}\left\Vert \bar{b}_{n^{N}}^{*}-\ddot{b}\right\Vert \geq\delta^{*}$
for some $\delta^{*}>0$, by the assumption of non-convergence and
the continuity of $\mathcal{L}\big(\alpha,b,\mu^{Q},\Sigma^{Q}\big)$
in $b$ (noting that it is possible that $\ddot{b}\in B\left(\alpha,Q_{n^{N}}\right)$).
Specifically, we define the compact set $D^{\delta^{*}}\big(\alpha,Q_{\infty}\big)\coloneqq\big\{ b\in\mathcal{B}^{m}:\,\inf_{\ddot{b}\in B(\alpha,Q_{\infty})}\left\Vert b-\ddot{b}\right\Vert \geq\delta^{*}\big\}$
for a given $\delta^{*}$, and assign the value $\epsilon^{*}$ as
$\epsilon_{D^{\delta^{*}}(\alpha,Q_{\infty})}^{*}$, given by:
\[
\max_{b\in\mathcal{B}^{m}}\mathcal{L}\big(\alpha,b,\mu^{Q_{\infty}},\Sigma^{Q_{\infty}}\big)-\max_{b\in D^{\delta^{*}}(\alpha,Q_{\infty})}\mathcal{L}\big(\alpha,b,\mu^{Q_{\infty}},\Sigma^{Q_{\infty}}\big)\eqqcolon\epsilon_{D^{\delta^{*}}(\alpha,Q_{\infty})}^{*}>0,
\]
where the latter term on the right-hand side decreases as $\delta^{*}$
increases due to its concavity.\smallskip

Hence, summing the two inequalities in (\ref{equality proof}) yields:
\begin{equation}
\begin{array}{ccc}
\mathcal{L}\big(\alpha,\bar{b}_{n^{N}}^{*},\mu^{Q_{n^{N}}},\Sigma^{Q_{n^{N}}}\big)-\mathcal{L}\big(\alpha,\bar{b}_{n^{N}}^{*},\mu^{Q_{\infty}},\Sigma^{Q_{\infty}}\big)\\
+\mathcal{L}\big(\alpha,\ddot{b},\mu^{Q_{\infty}},\Sigma^{Q_{\infty}}\big)-\mathcal{L}\big(\alpha,\ddot{b},\mu^{Q_{n^{N}}},\Sigma^{Q_{n^{N}}}\big) & \geq & \epsilon^{*},
\end{array}\label{equality proof 2}
\end{equation}
which contradicts the fact that there exists $N\big(\epsilon^{*},b\big)$
for all $b\in\mathcal{B}^{m}$ such that, by the triangle inequality,
the following sum of absolute terms must be strictly smaller than
$\epsilon^{*}$ for any $n>N\big(\epsilon^{*},b\big)$:
\[
\begin{array}{ccc}
\big\vert\mathcal{L}\big(\alpha,b,\mu^{Q_{n}},\Sigma^{Q_{n}}\big)-\mathcal{L}\left(\alpha,b,\mu^{Q_{\infty}},\Sigma^{Q_{\infty}}\right)\big\vert\\
+\big\vert\mathcal{L}\left(\alpha,\ddot{b},\mu^{Q_{\infty}},\Sigma^{Q_{\infty}}\right)-\mathcal{L}\left(\alpha,\ddot{b},\mu^{Q_{n}},\Sigma^{Q_{n}}\right)| & < & \epsilon^{*}.
\end{array}
\]
In other words, regarding (\ref{equality proof 2}), since $\epsilon^{*}$
is fixed and the latter term associated with the chosen $\ddot{b}$
decreases to zero over time, the former term is prevented from converging
to zero for some fixed portfolio choices. This follows from the convergence
$\mathcal{L}\left(\alpha,b,\mu^{Q_{n}},\Sigma^{Q_{n}}\right)\to\mathcal{L}\left(\alpha,b,\mu^{Q_{\infty}},\Sigma^{Q_{\infty}}\right)$
for any fixed $b$ and $\alpha$, which holds since $\mu^{Q_{n}}\to\mu^{Q_{\infty}}$
and $\Sigma^{Q_{n}}\to\Sigma^{Q_{\infty}}$ as $Q_{n}\to Q_{\infty}$
weakly. Consequently, any $\sup_{\bar{b}\in B(\alpha,Q_{n})}d\big(\bar{b},B\big(\alpha,Q_{\infty}\big)\big)$
can be attained by some optimal portfolios due to the compactness
of the non-empty sets $B\left(\alpha,Q_{n}\right)$, as stated in
assertion (a), and such a sequence $\left\{ \bar{b}_{n}^{*}\right\} _{n=1}^{\infty}$,
consisting of maximizers, must converge, as demonstrated.\smallskip

Using a similar contrapositive argument, we show that $\sup_{\hat{b}\in B(\alpha,Q_{\infty})}d\big(B\big(\alpha,Q_{n}\big),\hat{b}\big)\to0$
as $Q_{n}\to Q_{\infty}$ weakly by following the same procedure as
above, with the only difference being that the inequalities in (\ref{equality proof})
are swapped. Assuming there is a sequence $\left\{ \ddot{b}_{n}\right\} _{n=1}^{\infty}$
with $\ddot{b}_{n}\in B\left(\alpha,Q_{\infty}\right)$ for all $n$,
such that there exists a fixed $\delta^{*}>0$ and $n^{N}>N$ for
any $N$ satisfying $d\big(B\big(\alpha,Q_{n^{N}}\big),\ddot{b}_{n^{N}}\big)\geq\delta^{*}$,
then, for arbitrarily chosen $\bar{b}_{n^{N}}^{*}\in B\left(\alpha,Q_{n^{N}}\right)$,
we define $\epsilon_{n^{N}}^{*}$ as follows:
\[
\mathcal{L}\big(\alpha,\bar{b}_{n^{N}}^{*},\mu^{Q_{n^{N}}},\Sigma^{Q_{n^{N}}}\big)-\mathcal{L}\big(\alpha,\ddot{b}_{n^{N}},\mu^{Q_{n^{N}}},\Sigma^{Q_{n^{N}}}\big)\eqqcolon\epsilon_{n^{N}}^{*}>0,
\]
and subsequently obtain the following inequality:
\[
\begin{array}{ccc}
\big\vert\mathcal{L}\big(\alpha,\bar{b}_{n^{N}}^{*},\mu^{Q_{n^{N}}},\Sigma^{Q_{n^{N}}}\big)-\mathcal{L}\big(\alpha,\bar{b}_{n^{N}}^{*},\mu^{Q_{\infty}},\Sigma^{Q_{\infty}}\big)\big\vert\\
+\big\vert\mathcal{L}\big(\alpha,\ddot{b}_{n^{N}},\mu^{Q_{\infty}},\Sigma^{Q_{\infty}}\big)-\mathcal{L}\big(\alpha,\ddot{b}_{n^{N}},\mu^{Q_{n^{N}}},\Sigma^{Q_{n^{N}}}\big)\big\vert & \geq & \epsilon_{n^{N}}^{*}.
\end{array}
\]
Since $Q_{n^{N}}\to Q_{\infty}$ weakly as $N\to\infty$, both absolute
terms converge to zero, which results in $\epsilon_{n^{N}}^{*}\to0$.
Moreover, for a specific $\alpha$ and $Q_{n^{N}}$, because the objective
function $\mathcal{L}\left(\alpha,b,\mu^{Q_{n^{N}}},\Sigma^{Q_{n^{N}}}\right)$
is continuous in $b$ over the compact set $\mathcal{B}^{m}$, a near-optimality
implied by arbitrarily small $\epsilon_{n^{N}}^{*}$ ensures the proximity
of $\ddot{b}_{n}$ to the set of maximizers $B\left(\alpha,Q_{n^{N}}\right)$.
Therefore, this contradicts the assumed existence of a fixed $\delta^{*}>0$,
so $\sup_{\hat{b}\in B(\alpha,Q_{\infty})}d\big(B\big(\alpha,Q_{n}\big),\hat{b}\big)\to0$,
as required for assertion (b).\smallskip

To demonstrate assertion (c), since the variable $X$ is defined on
a compact subset $\mathcal{M}$ of the metric space $\mathbb{R}^{m}$,
the space $\mathcal{P}$, when equipped with the weak topology, is
also compact and metrizable (recalling that the weak topology is the
weakest topology on $\mathcal{P}$ such that $\mathbb{E}^{Q}\big(f\big(X\big)\big)$
is continuous in $Q$ for any bounded continuous real function $f(\cdot)$).
With this weakest topology, $\mu^{Q}$ and $\Sigma^{Q}$ are continuous
in $Q$, so the function $\mathcal{L}\left(\alpha,b,\mu^{Q},\Sigma^{Q}\right)$
is also continuous in $Q$. Therefore, for any $Q_{n}\to Q_{\infty}$
in $\mathcal{P}$, $d_{H}\left(B\left(\alpha,Q_{n}\right),B\left(\alpha,Q_{\infty}\right)\right)\to0$
by assertion (b), so $\max_{b\in\mathcal{B}^{m}}\mathcal{L}\big(\alpha,b,\mu^{Q_{n}},\Sigma^{Q_{n}}\big)\to\max_{b\in\mathcal{B}^{m}}\mathcal{L}\big(\alpha,b,\mu^{Q_{\infty}},\Sigma^{Q_{\infty}}\big)$,
which confirms that the function $\max_{b\in\mathcal{B}^{m}}\mathcal{L}\big(\alpha,b,\mu^{Q},\Sigma^{Q}\big)$
is continuous in $Q$. Also, by assertion (b) in this context, for
any sequence $\left\{ b_{n}\right\} _{n=1}^{\infty}$ such that $b_{n}\to b_{\infty}$
and $b_{n}\in B\big(\alpha,Q_{n}\big)$ for all $n$, we must have
$b_{\infty}\in B\big(\alpha,Q_{n}\big)$, which asserts the closedness
of the set $B\left(\alpha,Q\right)$ in $\mathcal{B}^{m}\times\mathcal{P}$.
Additionally, if the set $B\left(\alpha,Q^{*}\right)$ is a singleton
at $Q^{*}$, then the limiting point $b_{\infty}$ of any sequence
$\left\{ b_{n}\right\} _{n=1}^{\infty}$ is unique, which establishes
the continuity of the optimal selection function $b^{*}\left(\alpha,Q\right)$
at $Q^{*}$. Thus, all the statements are proved, completing the proof.\smallskip
\end{proof}
\textbf{Summability condition}. Prior to proposing the strategy, we
impose a minimal summability condition on the sequences of realizations.
The philosophical justification for this condition is that the M-V
model implies a utility function with respect to the true distribution
of future data, which is perceived through the single risk aversion
of the immutable-attitude investor. Thus, although the empirical distributions
of future data change over time, they should not change infinitely
often over an unlimited horizon. Specifically, let us assume that
for a deterministic infinite sequence $x_{1}^{\infty}$, the corresponding
sequence of empirical distributions $\big\{ P_{n}\big\}_{n=1}^{\infty}$
converges weakly to a limiting distribution $P_{\infty}$ as the number
of observations $n\to\infty$. In this context, if the sequence $x_{1}^{\infty}$
is entirely known to the investor, the corresponding limiting distribution
$P_{\infty}$ is also known, and thus the associated risk aversion
$\alpha\big(P_{\infty}\big)\coloneqq\alpha\big(x_{1}^{\infty}\big)$
is uniquely determined regardless of permutations of $x_{1}^{\infty}$.
Consequently, the consistency according to (\ref{consistency}) becomes
the convergence of the empirical M-V utility of a strategy $\big(b_{n}\big)$
to the limiting M-V utility $\max_{b\in\mathcal{B}^{m}}\mathcal{L}\big(\alpha\big(P_{\infty}\big),b,\mu^{P_{\infty}},\Sigma^{P_{\infty}}\big)$,
since Lemma \ref{Lemma 1} asserts that $\max_{b\in\mathcal{B}^{m}}\mathcal{L}\big(\alpha,b,\mu^{P_{n}},\Sigma^{P_{n}}\big)$$\to\max_{b\in\mathcal{B}^{m}}\mathcal{L}\big(\alpha,b,\mu^{P_{\infty}},\Sigma^{P_{\infty}}\big)$
for any coefficient $\alpha$.\smallskip

\subsection{Online estimation algorithm for market without redundant assets}

The following algorithm for constructing a dynamic strategy simply
updates the estimations over time for the parameters of the M-V model\footnote{Updating estimators here does not necessarily imply a Bayesian model,
which will be discussed in section 4, since the investor is not supposed
to estimate the conditional expectation $\mathbb{E}\big(\left\langle b,X_{n}\right\rangle |X_{1}^{n-1}\big)$
and variance $\Var\big(\left\langle b,X_{n}\right\rangle |X_{1}^{n-1}\big)\big)$
at each time $n$. Additionally, although there exist practical examples
of risk aversion coefficients for investors referenced in \citet{Grossman1981,Bodie2005,Ang2014},
they are not the same for all individuals and generally cannot be
imposed in all markets as their distributions of assets' returns change
and remain unknown.}, based on observations of past realizations during the investment,
as the functions $\mu_{n}\coloneqq\mu\big(x_{1}^{n-1}\big)$ and $\varSigma_{n}\coloneqq\varSigma\big(x_{1}^{n-1}\big)$.
Given a predetermined risk aversion coefficient $\alpha$, the investor
derives an M-V portfolio corresponding to the M-V model with updated
parameters at each time $n$. In the remainder of this paper, all
dynamic strategies of this kind are collectively referred to as the
(dynamic) M-V strategy.\smallskip

\textbf{Strategy proposal with constant risk aversion}. For a predetermined
risk aversion $\alpha$, the investor causally derives a portfolio
of the M-V strategy $\big(b_{n}^{\alpha}\big)$ at each time $n$
as follows:
\begin{equation}
b_{n}^{\alpha}\coloneqq\left(1/m,...,1/m\right),\forall n\leq h\text{ and }b_{n}^{\alpha}\in B\left(\alpha,P_{n-1}\right),\forall n>h,\label{Proposal strategy mean variance}
\end{equation}
where $B\left(\alpha,P_{n}\right)$ is defined according to (\ref{Set B_n})
and the initial $b_{n}^{\alpha}$ for $n\leq h$ can be chosen arbitrarily
rather than equal allocation as in (\ref{Proposal strategy mean variance}).
The number $h$ is a manual input which should be significantly larger
than the number of assets $m$ but not too large, for instance $h=2m$.
Noteworthily, in this construction, the investor is simply using the
estimators $\mu_{n}=\mu^{P_{n-1}}$ and $\Sigma_{n}=\Sigma^{P_{n-1}}$
for the parameters of the M-V model, depending on past realizations
$x_{1}^{n-1}$. In a market with no redundant assets according to
Definition \ref{Redundant}, the consistency of the proposed M-V strategy
$\big(b_{n}^{\alpha}\big)$ is asserted by the following Theorem \ref{Theorem 1}
for any predetermined risk aversion.
\begin{defn}
A market of $m$ risky assets with returns variable $X\sim Q$ is
said not to include redundant assets, concerning the distribution
$Q\in\mathcal{P}$, if the covariance matrix $\Sigma^{Q}$ is (strictly)
positive definite. Accordingly, when $P_{n}\to P_{\infty}$, the empirical
market is said not to include redundant assets if $\Sigma^{P_{\infty}}$
is a positive definite matrix.\label{Redundant}
\end{defn}
\begin{rem*}
As commonly referred to in the finance literature, especially in portfolio
selection, redundant assets are defined as ones that are linearly
dependent on the remaining assets, thereby not benefiting investment
portfolio diversification. For an infinite sequence $x_{1}^{\infty}$,
when the number of observations $n$ increases to be somewhat larger
than the number of assets $m$, the empirical distributions $P_{n}$
can have support sets that are full-dimensional, even if the involved
assets exhibit strong correlation; thus, the corresponding empirical
covariance matrices $\Sigma^{P_{n}}$ are also positive definite due
to their full rank. Consequently, when the empirical market does not
include redundant assets, the investor will observe positive definite
covariance matrices as time evolves without needing to know the limiting
distribution.\smallskip
\end{rem*}
\begin{lem}
Given an infinite sequence of realizations $x_{1}^{\infty}$ with
the corresponding sequence of empirical distributions $\big\{ P_{n}\big\}_{n=1}^{\infty}$,
consider two generic strategies $\big(\hat{b}_{n}\big)$ and $\big(\bar{b}_{n}\big)$,
i.e., they can be constant, such that $\big(\hat{b}_{n}-\bar{b}_{n}\big)\to0$.
Then, as $n\to\infty$, all the following sequences of differences
simultaneously converge to zero:
\[
\big\{ M_{n}\big(\hat{b}_{n}\big)-M_{n}\big(\bar{b}_{n}\big)\big\}_{n=1}^{\infty},\text{ }\big\{ V_{n}\big(\hat{b}_{n}\big)-V_{n}\big(\bar{b}_{n}\big)\big\}_{n=1}^{\infty},\text{ and }\big\{ W_{n}\big(\hat{b}_{n}\big)-W_{n}\big(\bar{b}_{n}\big)\big\}_{n=1}^{\infty},
\]
where the last term requires\textbf{ $\big<\hat{b}_{n},x_{n}\big>$
}and\textbf{ $\big<\bar{b}_{n},x_{n}\big>$ }to be bounded away from
zero for all large $n$.\label{lemma 2}
\end{lem}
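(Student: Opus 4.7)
The plan is to reduce each of the three differences to a Ces\`aro average of pointwise differences that vanish, and then invoke the elementary fact that if $a_i\to 0$ then $\tfrac{1}{n}\sum_{i=1}^{n}a_i\to 0$. The common tool is the uniform bound $\langle b,x\rangle\le M$ valid for every $b\in\mathcal{B}^{m}$ and every $x\in\mathcal{M}$, together with the Cauchy--Schwarz estimate $|\langle\hat b_i-\bar b_i,x_i\rangle|\le\|x_i\|\cdot\|\hat b_i-\bar b_i\|\le M\|\hat b_i-\bar b_i\|$, which tends to zero by hypothesis. This immediately settles the mean case, since $M_n(\hat b_n)-M_n(\bar b_n)=\tfrac{1}{n}\sum_{i=1}^{n}\langle\hat b_i-\bar b_i,x_i\rangle$ and Ces\`aro averaging closes the argument.

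For the variance, I would first rewrite $V_n(b_n)=\tfrac{1}{n}\sum_{i=1}^{n}\langle b_i,x_i\rangle^{2}-M_n(b_n)^{2}$. The squared-mean term is controlled by the previous step: since both $M_n(\hat b_n)$ and $M_n(\bar b_n)$ lie in $[0,M]$ and their difference vanishes, the factorization $M_n(\hat b_n)^{2}-M_n(\bar b_n)^{2}=\bigl(M_n(\hat b_n)+M_n(\bar b_n)\bigr)\bigl(M_n(\hat b_n)-M_n(\bar b_n)\bigr)$ shows it tends to zero. For the second-moment term, the analogous factorization $\langle\hat b_i,x_i\rangle^{2}-\langle\bar b_i,x_i\rangle^{2}=\bigl(\langle\hat b_i,x_i\rangle+\langle\bar b_i,x_i\rangle\bigr)\langle\hat b_i-\bar b_i,x_i\rangle$ together with the uniform bound yields $|\langle\hat b_i,x_i\rangle^{2}-\langle\bar b_i,x_i\rangle^{2}|\le 2M^{2}\|\hat b_i-\bar b_i\|\to 0$, and one more application of Ces\`aro averaging concludes.

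For the growth rate, I would write the difference as $\tfrac{1}{n}\sum_{i=1}^{n}\log\bigl(\langle\hat b_i,x_i\rangle/\langle\bar b_i,x_i\rangle\bigr)$. The hypothesis provides some $\eta>0$ and $N_{0}$ so that both $\langle\hat b_i,x_i\rangle$ and $\langle\bar b_i,x_i\rangle$ lie in the compact interval $[\eta,M]$ for all $i\ge N_{0}$; on this interval the logarithm is Lipschitz with constant $1/\eta$, so $|\log\langle\hat b_i,x_i\rangle-\log\langle\bar b_i,x_i\rangle|\le(1/\eta)|\langle\hat b_i-\bar b_i,x_i\rangle|\le(M/\eta)\|\hat b_i-\bar b_i\|\to 0$. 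The finitely many terms with $i<N_{0}$ contribute only $O(1/n)$ to the average and are absorbed. The main obstacle, and the reason for the extra hypothesis in the statement, is precisely this localization away from zero in the logarithmic case: without it, individual terms $\log\langle b_i,x_i\rangle$ can diverge to $-\infty$ and there is no uniform continuity of $\log$ to exploit, so the assumption that $\langle\hat b_n,x_n\rangle$ and $\langle\bar b_n,x_n\rangle$ stay bounded away from zero is essential rather than merely convenient.
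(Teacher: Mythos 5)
Your proof is correct, and for the mean and growth-rate terms it follows essentially the same route as the paper: Cauchy--Schwarz gives $|\langle\hat b_i-\bar b_i,x_i\rangle|\le M\|\hat b_i-\bar b_i\|\to 0$, and the Ces\`aro mean theorem closes each case, with the logarithm handled by localizing both returns in $[\eta,M]$ (the paper phrases this as $\langle\hat b_n,x_n\rangle/\langle\bar b_n,x_n\rangle\to 1$; your Lipschitz bound with constant $1/\eta$ is the same idea made quantitative). Where you genuinely diverge is the variance term. The paper works directly with the centered squares $\big(\langle b_j,x_j\rangle-\tfrac{1}{n}\sum_i\langle b_i,x_i\rangle\big)^{2}$, factors the difference as $A_nB_n$ with $B_n$ bounded and $A_n\to 0$, and then applies Ces\`aro a second time; this is slightly delicate because the summands form a triangular array (each term depends on both $j$ and $n$ through the running mean), so the Ces\`aro step there needs a little care. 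Your decomposition $V_n(b_n)=\tfrac{1}{n}\sum_{i=1}^{n}\langle b_i,x_i\rangle^{2}-M_n(b_n)^{2}$ sidesteps this entirely: the second-moment part is a genuine Ces\`aro average of a null sequence (bounded by $2M^{2}\|\hat b_i-\bar b_i\|$), and the squared-mean part reduces to the already-established mean case via the factorization of a difference of squares, using $M_n\in[0,M]$. This is cleaner and arguably more rigorous on the variance step, at no cost in generality; both arguments rely on the same boundedness of returns over $\mathcal{B}^{m}\times\mathcal{M}$.
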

\begin{proof}
The convergence $\big(\hat{b}_{n}-\bar{b}_{n}\big)\to0$ implies $\big(\big<\hat{b}_{n},x_{n}\big>-\big<\bar{b}_{n},x_{n}\big>\big)\to0$
due to:
\[
|\big<\hat{b}_{n}-\bar{b}_{n},x_{n}\big>|\leq\|\hat{b}_{n}-\bar{b}_{n}\|\|x_{n}\|\leq M\|\hat{b}_{n}-\bar{b}_{n}\|,\,\forall n,
\]
by the Cauchy\textendash Schwarz inequality and the boundedness of
the sequence $x_{1}^{\infty}$ (as $X_{n}\in\mathcal{M}$). Then:
\begin{align}
\lim_{n\to\infty}\big(\big<\hat{b}_{n},x_{n}\big>-\big<\bar{b}_{n},x_{n}\big>\big)=0 & \Rightarrow\lim_{n\to\infty}\frac{1}{n}\sum_{i=1}^{n}\big(\big<\hat{b}_{i},x_{i}\big>-\big<\bar{b}_{i},x_{i}\big>\big)=0\nonumber \\
 & \Rightarrow\lim_{n\to\infty}\big(M_{n}\big(\hat{b}_{n}\big)-M_{n}\big(\bar{b}_{n}\big)\big)=0,\label{M_n - M_n}
\end{align}
due to the Cesaro mean theorem. Since $\big(\log\big<\hat{b}_{n},x_{n}\big>-\log\big<\bar{b}_{n},x_{n}\big>\big)\to0$
due to $\big<\hat{b}_{n},x_{n}\big>\big/\big<\bar{b},x_{n}\big>\to1$,
by using a similar argument as above, we obtain the next needed result
as:
\[
\lim_{n\to\infty}\big(W_{n}\big(\hat{b}_{n}\big)-W_{n}\big(\bar{b}_{n}\big)\big)=0,
\]
given that both\textbf{ $\big<\hat{b}_{n},x_{n}\big>$ }and\textbf{
$\big<\bar{b}_{n},x_{n}\big>$} are assumed to be bounded away from
zero for all large $n$.\smallskip

On the other hand, by using the obtained results in (\ref{M_n - M_n}),
we have the following limit:
\begin{equation}
\lim_{n\to\infty}\Big(\Big(\big<\hat{b}_{n},x_{n}\big>-\frac{1}{n}\sum_{i=1}^{n}\big<\hat{b}_{i},x_{i}\big>\Big)^{2}-\Big(\big<\bar{b}_{n},x_{n}\big>-\frac{1}{n}\sum_{i=1}^{n}\big<\bar{b}_{i},x_{i}\big>\Big)^{2}\Big)=\lim_{n\to\infty}A_{n}B_{n}=0,\label{V_n-V_n lim}
\end{equation}
where 
\[
B_{n}\coloneqq\Big(\big<\hat{b}_{n},x_{n}\big>-\frac{1}{n}\sum_{i=1}^{n}\big<\hat{b}_{i},x_{i}\big>+\big<\bar{b}_{n},x_{n}\big>-\frac{1}{n}\sum_{i=1}^{n}\big<\bar{b}_{i},x_{i}\big>\Big)
\]
is bounded as $\big<\hat{b}_{n},x_{n}\big>$ and $\big<\bar{b},x_{n}\big>$
are bounded over the compact set $\mathcal{B}^{m}$ for any $x_{n}$,
so both $n^{-1}\sum_{i=1}^{n}\big<\hat{b}_{i},x_{i}\big>$ and $n^{-1}\sum_{i=1}^{n}\big<\bar{b}_{i},x_{i}\big>$
are bounded for all $n$ as well. Meanwhile:
\[
\lim_{n\to\infty}A_{n}\coloneqq\lim_{n\to\infty}\Big(\big<\hat{b}_{n},x_{n}\big>-\big<\bar{b}_{n},x_{n}\big>+\frac{1}{n}\sum_{i=1}^{n}\big<\bar{b}_{i},x_{i}\big>-\frac{1}{n}\sum_{i=1}^{n}\big<\hat{b}_{i},x_{i}\big>\Big)=0.
\]
Then, by invoking the Cesaro mean theorem again for (\ref{V_n-V_n lim}),
we obtain the last needed result:
\begin{align*}
 & \lim_{n\to\infty}\frac{1}{n}\sum_{j=1}^{n}\Big(\Big(\big<\hat{b}_{j},x_{j}\big>-\frac{1}{n}\sum_{i=1}^{n}\big<\hat{b}_{i},x_{i}\big>\Big)^{2}-\Big(\big<\bar{b}_{j},x_{j}\big>-\frac{1}{n}\sum_{i=1}^{n}\big<\bar{b}_{i},x_{i}\big>\Big)^{2}\Big)=0\\
\Rightarrow & \lim_{n\to\infty}\big(V_{n}\big(\hat{b}_{n}\big)-V_{n}\big(\bar{b}_{n}\big)\big)=0,
\end{align*}
and this completes the proof.
\end{proof}
\begin{thm}
For an infinite sequence of realizations $x_{1}^{\infty}$, assume
the corresponding empirical distribution converges weakly to a limiting
distribution as $P_{n}\to P_{\infty}$ and the empirical market does
not include redundant assets with respect to $P_{\infty}$. Consider
the M-V strategy $\big(b_{n}^{\alpha}\big)$ constructed according
to (\ref{Proposal strategy mean variance}) with any constant risk
aversion $\alpha$, then:\label{Theorem 1}
\begin{enumerate}
\item $\lim_{n\to\infty}\big(M_{n}\left(b_{n}^{\alpha}\right)-\alpha V_{n}\left(b_{n}^{\alpha}\right)\big)=\mathcal{L}\big(\alpha,b^{\alpha},\mu^{P_{\infty}},\Sigma^{P_{\infty}}\big),\vspace{0.2ex}$
\item $\lim_{n\to\infty}Sh_{n}\left(b_{n}^{\alpha}\right)={\displaystyle \frac{\mathbb{E}^{P_{\infty}}\big(\big<b^{\alpha},X\big>\big)}{\sqrt{\smash[b]{\Var^{P_{\infty}}\big(\big<b^{\alpha},X\big>\big)}}}}\text{ and }\lim_{n\to\infty}W_{n}\left(b_{n}^{\alpha}\right)=\mathbb{E}^{P_{\infty}}\big(\log\big<b^{\alpha},X\big>\big),$
\end{enumerate}
with $b^{\alpha}\in B\big(\alpha,P_{\infty}\big)$, which is a singleton
set.
\end{thm}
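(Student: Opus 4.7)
The plan is to decouple the dynamic strategy $\big(b_{n}^{\alpha}\big)$ from its limiting constant counterpart $\big(b^{\alpha}\big)$ and handle the two pieces separately. Concretely, I will first establish $b_{n}^{\alpha}\to b^{\alpha}$ via continuity of the optimal selector obtained in Lemma~\ref{Lemma 1}(c); Lemma~\ref{lemma 2} will then transfer the asymptotic behaviour of $M_{n}$, $V_{n}$, and $W_{n}$ from the dynamic strategy to the constant one. For the constant strategy itself, the empirical moments $M_{n}\big(b^{\alpha}\big)$, $V_{n}\big(b^{\alpha}\big)$, and $W_{n}\big(b^{\alpha}\big)$ will be shown to converge to the corresponding $P_{\infty}$-moments simply by applying weak convergence $P_{n}\to P_{\infty}$ to suitable bounded continuous test functions on $\mathcal{M}$.

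\textbf{Execution.} Under no-redundancy, $\Sigma^{P_{\infty}}\succ0$, so (for $\alpha>0$; the degenerate case $\alpha=0$ is excluded by the singleton claim) $\mathcal{L}\big(\alpha,\cdot,\mu^{P_{\infty}},\Sigma^{P_{\infty}}\big)$ is strictly concave on the simplex $\mathcal{B}^{m}$, forcing $B\big(\alpha,P_{\infty}\big)=\big\{b^{\alpha}\big\}$. Lemma~\ref{Lemma 1}(c) then makes $Q\mapsto b^{*}\big(\alpha,Q\big)$ continuous at $P_{\infty}$, and weak convergence $P_{n-1}\to P_{\infty}$ delivers $b_{n}^{\alpha}=b^{*}\big(\alpha,P_{n-1}\big)\to b^{\alpha}$. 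Feeding $\hat{b}_{n}=b_{n}^{\alpha}$ and $\bar{b}_{n}=b^{\alpha}$ into Lemma~\ref{lemma 2} yields $M_{n}\big(b_{n}^{\alpha}\big)-M_{n}\big(b^{\alpha}\big)\to0$, $V_{n}\big(b_{n}^{\alpha}\big)-V_{n}\big(b^{\alpha}\big)\to0$, and $W_{n}\big(b_{n}^{\alpha}\big)-W_{n}\big(b^{\alpha}\big)\to0$; the hypothesis that portfolio returns are bounded away from zero, needed for the $W_{n}$-transfer, holds because $\mathcal{M}$ is compact and sits in the open positive orthant. For the constant strategy, $M_{n}\big(b^{\alpha}\big)$, $V_{n}\big(b^{\alpha}\big)$, and $W_{n}\big(b^{\alpha}\big)$ are expressible as integrals against $P_{n}$ of the bounded continuous functions $x\mapsto\langle b^{\alpha},x\rangle$, $x\mapsto\langle b^{\alpha},x\rangle^{2}$, and $x\mapsto\log\langle b^{\alpha},x\rangle$ on $\mathcal{M}$ (via $V_{n}\big(b^{\alpha}\big)=\int\langle b^{\alpha},x\rangle^{2}dP_{n}-M_{n}\big(b^{\alpha}\big)^{2}$), so weak convergence sends them to $\langle b^{\alpha},\mu^{P_{\infty}}\rangle$, $\langle b^{\alpha},\Sigma^{P_{\infty}}b^{\alpha}\rangle$, and $\mathbb{E}^{P_{\infty}}\big(\log\langle b^{\alpha},X\rangle\big)$, respectively. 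Assertion (a) is then the linear combination of the first two limits, producing $\mathcal{L}\big(\alpha,b^{\alpha},\mu^{P_{\infty}},\Sigma^{P_{\infty}}\big)$. For assertion (b), the growth-rate statement is immediate, while the Sharpe-ratio statement follows from continuity of $(m,v)\mapsto m/\sqrt{v}$ at a point with $v=\langle b^{\alpha},\Sigma^{P_{\infty}}b^{\alpha}\rangle>0$, the positivity being guaranteed by $\Sigma^{P_{\infty}}\succ0$ and the fact that $b^{\alpha}\neq0$ on the simplex.

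\textbf{Main obstacle.} The only nontrivial ingredient is the singleton-and-continuity pair: once it is in place, the rest is a mechanical assembly of Lemma~\ref{Lemma 1}(c), Lemma~\ref{lemma 2}, and the Portmanteau characterisation of weak convergence. The bookkeeping subtlety is that the containment of $\mathcal{M}$ in the open positive orthant as a compact set must be used twice, once to supply a uniform lower bound on portfolio returns (so that $\log\langle b^{\alpha},\cdot\rangle$ is a bounded continuous test function and Lemma~\ref{lemma 2} applies to $W_{n}$), and once, together with the no-redundancy assumption, to ensure strict positivity of the limiting variance (so the Sharpe-ratio map is continuous at the limit point).
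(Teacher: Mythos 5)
Your proposal is correct and follows essentially the same route as the paper: positive definiteness of $\Sigma^{P_{\infty}}$ gives strict concavity and hence a singleton $B\big(\alpha,P_{\infty}\big)$, Lemma~\ref{Lemma 1} gives $b_{n}^{\alpha}\to b^{\alpha}$, Lemma~\ref{lemma 2} transfers the empirical quantities to the constant strategy $\big(b^{\alpha}\big)$, and weak convergence $P_{n}\to P_{\infty}$ finishes each limit. Your explicit flagging of the $\alpha=0$ degeneracy and of the positivity needed for the Sharpe-ratio denominator only makes explicit what the paper leaves implicit.
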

\begin{proof}
Since the empirical market does not include redundant assets, the
covariance matrix $\Sigma^{P_{\infty}}$ is positive definite, so
the function $\mathcal{L}\big(\alpha,b^{\alpha},\mu^{P_{\infty}},\Sigma^{P_{\infty}}\big)$
is strictly concave over the compact and convex feasible set $\mathcal{B}^{m}$.
Thus, the corresponding set $B\big(\alpha,P_{\infty}\big)$ is a singleton
set and by Lemma \ref{Lemma 1}, we have $b_{n}^{\alpha}\to b^{\alpha}\in B\big(\alpha,P_{\infty}\big)$.
Then, by Lemma \ref{lemma 2} we have the following limits:
\[
\begin{cases}
{\displaystyle \lim_{n\to\infty}M_{n}\left(b_{n}^{\alpha}\right)}={\displaystyle \lim_{n\to\infty}\mathbb{E}^{P_{n}}\big(\big<b^{\alpha},X\big>\big)}=\mathbb{E}^{P_{\infty}}\big(\big<b^{\alpha},X\big>\big),\\
{\displaystyle \lim_{n\to\infty}V_{n}\left(b_{n}^{\alpha}\right)}={\displaystyle \lim_{n\to\infty}\Var^{P_{n}}\big(\big<b^{\alpha},X\big>\big)}=\Var^{P_{\infty}}\big(\big<b^{\alpha},X\big>\big),
\end{cases}
\]
due to $P_{n}\to P_{\infty}$ weakly, and the limiting point $b^{\alpha}$
being treated as a constant M-V strategy $\big(b^{\alpha}\big)$,
so the assertion (a) follows immediately. For the assertion (b), the
convergence of the empirical Sharpe ratio of return is a trivial consequence
of the convergence of the sequences $\left\{ M_{n}\left(b_{n}^{\alpha}\right)\right\} _{n=1}^{\infty}$
and $\left\{ V_{n}\left(b_{n}^{\alpha}\right)\right\} _{n=1}^{\infty}$,
while the convergence of the growth rate sequence $\left\{ W_{n}\left(b_{n}^{\alpha}\right)\right\} _{n=1}^{\infty}$
is a direct result of Lemma \ref{lemma 2} and a similar argument
as above.
\end{proof}
Theorem \ref{Theorem 1} affirms that a constant M-V strategy, derived
before starting the investment using a single collected sample for
estimating model parameters, will experience the desired performances
only if the empirical distributions of future realizations converge
to the distribution of the training dataset. For instance, regarding
the traditional assumption in the literature of M-V theory, if the
collected sample of assets' returns is realizations of an i.i.d distribution,
significantly large enough, then the empirical distributions of future
data will almost surely converge to the same distribution of the sample,
as per the strong law of large numbers or the Glivenko\textendash Cantelli
theorem. It's important to note that the i.i.d. assumption is much
less general than the condition of weak convergence, as a converging
sequence of empirical distributions does not necessarily imply i.i.d.,
and the distribution of the future data may differ from that of the
collected sample. Consequently, misestimation leads to underperformance
for any constant M-V strategy compared to those derived with knowledge
of the limiting distribution.\smallskip

Moreover, if the risk aversion corresponding to the limiting distribution
can be calibrated beforehand as $\alpha\big(P_{\infty}\big)$, the
M-V strategy $\big(b_{n}^{\alpha(P_{\infty})}\big)$ yields the empirical
M-V utility that is consistent in the sense of (\ref{consistency}),
while also asymptotically achieving the desired Sharpe ratio and growth
rate of the constant M-V strategy as the unique portfolio in $B\big(\alpha\big(P_{\infty}\big),P_{\infty}\big)$.
Unfortunately, neither the distribution $P_{\infty}$ nor the risk
aversion $\alpha\big(P_{\infty}\big)$ is known beforehand. If the
investor can define a risk aversion updating function $\alpha\big(Q\big)$
that is continuous in $Q\in\mathcal{P}$, then the M-V strategy $\big(b_{n}^{\alpha_{n}}\big)$,
derived according to (\ref{Proposal strategy mean variance}) with
updating $\alpha_{n}=\alpha\big(P_{n-1}\big)$ instead of a constant
coefficient, will solve the problem; nonetheless, such a continuous
risk aversion function seems to be nontrivial and challenging to construct.
In the subsequent sections, we will discuss the impact of risk aversion
on the performances of M-V strategies and propose a method of coefficient
updating to ensure either the optimal Sharpe ratio or the optimal
growth rate under an unknown limiting distribution.

\subsection{Relation of growth and mean-variance tradeoff}

In this section, we discuss the impact of the investor's risk aversion
on the growth rate of the M-V strategies. It is easy to show that
$\mathbb{E}^{P_{n}}\left(\log\big<b,X\big>\right)\leq\log\mathbb{E}^{P_{n}}\left(\big<b,X\big>\right)<\mathbb{E}^{P_{n}}\big(\big<b,X\big>\big)$
for all empirical distributions $P_{n}$, by using Jensen's inequality
or the AM\textendash GM inequality. Thus, a constant strategy with
a higher empirical expected return does not guarantee a higher empirical
growth rate. However, since we are considering the M-V strategies,
the quadratic utility function should be regarded rather than simply
the expected return utility, which involves a tradeoff with the variance
of portfolio returns. Proposition \ref{Normal mean variance tradeoff}
provides a theoretical explanation for the relation between the Sharpe
ratio and the expected logarithmic utility of a portfolio along the
efficient frontier with a normal distribution of assets' returns,
which is typically considered in the finance literature and especially
in M-V theory. The remark below stresses that this potential relationship
has been an interesting open question for a long time.\smallskip
\begin{rem*}
The relation between the expected logarithmic return and M-V utility
has attracted research attention for decades, as they are two main
models in portfolio selection, which is empirically experienced in
the market with a normal distribution for assets' returns. For instance,
experiments with various distributions in \citet{Grauer1981} show
that the strategies of the two models yield similar performances if
the assets' returns follow a normal distribution; such results are
also found empirically in research by \citet{Kroll1984,Levy1979}.
In a more specific case of a log-normal distribution, \citet{Merton1973}
shows that the portfolio with the optimal expected logarithmic return
is instantaneously mean-variance efficient under a continuous-time
model. Yet, the Sharpe ratio and risk aversion seem not to have been
considered for explanation, which the following proposition addresses
by showing that the portfolio maximizing expected logarithmic return
must also be M-V efficient for a given risk aversion, i.e., it lies
on the efficient frontier.
\end{rem*}
\begin{prop}
For two portfolios $b$, $\bar{b}$ and a normally distributed variable
$X\sim\mathcal{N}\left(\mu,\Sigma\right)$, if their expected returns
and Sharpe ratios satisfy, respectively:
\[
\mathbb{E}\big(\big<b,X\big>\big)\geq\mathbb{E}\big(\big<\bar{b},X\big>\big)\text{ and }\frac{\mathbb{E}\big(\big<b,X\big>\big)}{\sqrt{\smash[b]{\Var\left(\left\langle b,X\right\rangle \right)}}}\geq\frac{\mathbb{E}\big(\big<\bar{b},X\big>\big)}{\sqrt{\smash[b]{\Var\big(\big<\bar{b},X\big>\big)}}},
\]
then $\mathbb{E}\big(\log\big<b,X\big>\big)\geq\mathbb{E}\big(\log\big<\bar{b},X\big>\big)$,
with equality holding if the above conditions are equalities.\label{Normal mean variance tradeoff}
\end{prop}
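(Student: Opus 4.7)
The plan is to reduce the two-portfolio comparison to a scalar statement about one-dimensional Gaussians, and then exhibit separate monotonicity of the expected log-return in the mean and in the Sharpe ratio.

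First, since $X \sim \mathcal{N}(\mu, \Sigma)$, both scalar returns $Y \coloneqq \langle b, X\rangle$ and $\bar{Y} \coloneqq \langle \bar{b}, X\rangle$ are univariate normal with means $\mu_Y, \mu_{\bar{Y}}$ and variances $\sigma_Y^2, \sigma_{\bar{Y}}^2$ obtained from $\mu$ and $\Sigma$ in the usual way. Assuming the portfolio means are positive, which is consistent with the positivity of asset returns imposed in Section~2 and with the finance-literature convention of the references cited just before the proposition, I would write $Y = \mu_Y(1 + Z/h_Y)$ with $Z \sim \mathcal{N}(0,1)$ and Sharpe ratio $h_Y \coloneqq \mu_Y/\sigma_Y$, giving
\[
\mathbb{E}\bigl(\log Y\bigr) = \log \mu_Y + f(h_Y), \qquad f(h) \coloneqq \mathbb{E}\bigl[\log(1 + Z/h)\bigr],
\]
and likewise for $\bar{Y}$. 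The key point is that $f$ depends on $b$ only through $h_Y$, so the dependence on the mean decouples cleanly from the dependence on the Sharpe ratio.

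The core step is to show that $f$ is strictly increasing in $h > 0$. The cleanest route is a formal Taylor expansion $\log(1+u) = \sum_{k \geq 1}(-1)^{k+1}u^k/k$ combined with the Gaussian central moments $\mathbb{E}[Z^{2k}] = (2k-1)!!$ and $\mathbb{E}[Z^{2k+1}] = 0$, yielding
\[
f(h) = -\sum_{k=1}^{\infty} \frac{(2k-1)!!}{2k\, h^{2k}},
\]
each summand of which is strictly decreasing in $h$, so $f$ is strictly increasing. An equivalent derivative-based route reduces to a Jensen-type inequality applied to the convex map $z \mapsto 1/(h+z)$ and gives the same sign, modulo identical tail caveats noted below.

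Concluding, under the hypotheses of the proposition we have $\log \mu_Y \geq \log \mu_{\bar{Y}}$ (since $\log$ is increasing on the positive reals) and $f(h_Y) \geq f(h_{\bar{Y}})$ (by the monotonicity of $f$ just established), so adding the two inequalities gives $\mathbb{E}(\log Y) \geq \mathbb{E}(\log \bar{Y})$, with equality exactly when both hypotheses are equalities. The main obstacle I anticipate is the rigorous justification of the series identity (and of the alternative Jensen estimate), because $\log(1+Z/h)$ is undefined on the event $\{Z \leq -h\}$, which has positive Gaussian mass for every finite $h$. This is resolved here by the standing convention from Section~2 that asset returns lie in the positive orthant, so that the normal distribution is used only as a modeling device in the spirit of the cited works, or equivalently by treating the identities asymptotically in the regime where $h$ is large enough that the negative tail is negligible.
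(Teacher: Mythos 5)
Your proposal is essentially the paper's own proof: both reparametrize the portfolio return via its mean and Sharpe ratio, formally Taylor-expand the logarithm around the mean, and use the Gaussian central moments $(2k-1)!!\,\sigma^{2k}$ to obtain the identical series $\log\mu_b-\sum_{k\geq1}\frac{(2k-1)!!}{2k}\big(\sigma_b/\mu_b\big)^{2k}$, whose termwise monotonicity in $\mu_b$ and in $\mu_b/\sigma_b$ gives the conclusion. The caveat you flag at the end is real but applies equally to the paper's argument: the series has radius of convergence zero (since $(2k-1)!!$ grows faster than geometrically) and $\log\langle b,X\rangle$ is undefined on the positive-probability event $\langle b,X\rangle\leq0$, so both proofs are formal/asymptotic in the same sense --- you are simply more explicit about it.
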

\begin{proof}
Since $X\sim\mathcal{N}\left(\mu,\Sigma\right)$, we have $\big<b,X\big>\sim\mathcal{N}\left(\mu_{b},\sigma_{b}^{2}\right)$
for any portfolio $b\in\mathcal{B}^{m}$, where $\mu_{b}\coloneqq\left\langle b,\mu\right\rangle $
and $\sigma_{b}^{2}\coloneqq\big<b,\Sigma b\big>$ are strictly positive.
Then, by using the Taylor expansion around $\mu_{b}$ for $\mathbb{E}\left(\log\big<b,X\big>\right)$,
we have the following expression in terms of Sharpe ratio and $\mu_{b}$:
\begin{align*}
\mathbb{E}\left(\log\big<b,X\big>\right) & =\log\mathbb{E}\big(\big<b,X\big>\big)+\sum_{k=1}^{\infty}\frac{\big(-1\big)^{k+1}\mathbb{E}\big(\big<b,X\big>-\mathbb{E}\big(\big<b,X\big>\big)\big)^{k}}{k\big(\mathbb{E}\big(\big<b,X\big>\big)\big)^{k}}\\
 & =\log\mu_{b}+\sum_{k=1}^{\infty}\frac{\big(-1\big)^{k+1}\mathbb{E}\big(\big<b,X\big>-\mu_{b}\big)^{k}}{k\mu_{b}^{k}}\\
 & =\log\mu_{b}-\sum_{i=1}^{\infty}\frac{1^{2i+1}\left(2i-1\right)!!}{2i}\Big(\frac{\sigma_{b}}{\mu_{b}}\Big)^{2i}
\end{align*}
Here, due to the symmetry of the normally distributed $\left\langle b,X\right\rangle $,
all the odd-order central moments vanish, while the even-order central
moments satisfy $\mathbb{E}\big(\big<b,X\big>-\mu_{b}\big)^{2i}=\left(2i-1\right)!!\sigma_{b}^{2i}$
for all $i$. Hence, if $\mu_{b}\geq\mu_{\bar{b}}$ and $\mu_{b}/\sigma_{b}\geq\mu_{\bar{b}}/\sigma_{\bar{b}}$,
then $\mathbb{E}\left(\log\big<b,X\big>\right)\geq\mathbb{E}\big(\log\big<\bar{b},X\big>\big)$
as needed to show.
\end{proof}
According to Proposition \ref{Normal mean variance tradeoff}, a constant
M-V strategy will yield a higher growth rate than any other constant
strategy that has both a lower empirical Sharpe ratio and expected
return, provided the number of observed realizations of assets' returns
is sufficiently large and their empirical distribution has a symmetric
shape as that of a normal one. This occurs because all M-V portfolios,
derived corresponding to different risk aversion coefficients, lie
on the efficient frontier; thus, each always has a higher Sharpe ratio
compared to inferior portfolios located below the frontier curve,
which have the same values of either variance or expected return.
When the risk aversion coefficient $\alpha$ is lowered as the investor
favors the expected return over the risk of a portfolio, the expected
value tends to shift upward along the efficient frontier. Overall,
this results in a tradeoff between the expected return of the portfolio
and its variance, often leading to an increase in the Sharpe ratio
of the M-V portfolios along the upward frontier up to a certain point.
Therefore, as long as moving the M-V portfolio upward along the frontier
with increasing expected return and variance, but at a slower rate
for the latter, the growth rate and also the cumulative wealth of
the respective constant M-V strategy generally increase. Notably,
along the frontier curve, if the M-V portfolio crosses the one maximizing
the Sharpe ratio, the rate of increase of variance tends to be faster
than that of the expected return, as the slope of the tangent line
at such a point is lowered.\smallskip 

This relation between the expected logarithmic return and the mean-variance
tradeoff asserts a particular context where an investor favoring the
expected return, up to a certain extent, often outperforms other investors
who are more risk-averse in terms of cumulative wealth of constant
M-V strategy. Under this context, for any presumed risk aversion coefficient
$\alpha$, the corresponding M-V strategy $\big(b_{n}^{\alpha}\big)$,
formed according to (\ref{Proposal strategy mean variance}), guarantees
the same empirical growth rate and Sharpe ratio as that of the constant
M-V strategy with knowledge of the limiting normal distribution by
Theorem \ref{Theorem 1}. Nonetheless, since the parameters of the
limiting distribution are unknown, even if the bell curve of the empirical
distributions of data can be observed as time progresses, the investor
is ambiguous about determining an appropriate risk aversion in advance.
Also, an alternative updating risk aversion function that is continuous
in distribution is difficult to define. Instead, the following section
proposes a way of constructing a sequence of updating risk aversion
coefficients converging to the desired one corresponding to the limiting
distribution.

\subsection{Adaptive risk aversion for optimal Sharpe ratio and growth rate}

Although constructing an updating risk aversion function $\alpha\left(Q\right)$
that is continuous in $Q\in\mathcal{P}$ is challenging, the investor
could reduce it to the task of constructing a sequence of adaptive
risk aversions $\left\{ \alpha_{n}\right\} _{n=1}^{\infty}$ based
on the observed empirical distributions $P_{n-1}$ for the times $n$,
such that $\alpha_{n}\to\alpha\big(P_{\infty}\big)$ as $P_{n}\to P_{\infty}$.
Under the ambiguity of the true risk aversion $\alpha\big(P_{\infty}\big)$,
the investor should instead consider a set of several coefficients
below a conjectured threshold rather than a single one at any time,
so a dynamic M-V strategy associated with one among them can achieve
the correct one corresponding to the limiting distribution. We propose
below an algorithm with an updating coefficient $\alpha_{n}$, based
on past realizations, to guarantee attaining the optimal asymptotic
Sharpe ratio and growth rate among the constant M-V strategies with
respect to a set of risk aversions.\smallskip 

\textbf{Strategy proposal with updating risk aversion}. Given a set
of finitely many risk aversion coefficients $\mathcal{A}\subset\mathbb{R}_{+}$,
the respective M-V strategy $\big(b_{n}^{\alpha}\big)$ is constructed
according to (\ref{Proposal strategy mean variance}) for any fixed
coefficient $\alpha\in\mathcal{A}$ as the subroutine algorithm. Then,
the investor adaptively constructs the M-V strategy $\big(b_{n}^{\alpha_{n}}\big)$
according to the updating coefficients $\alpha_{n}$ at time $n$
as follows:
\begin{equation}
b_{n}^{\alpha_{n}}\coloneqq\left(1/m,...,1/m\right),\forall n\leq h\text{ and }b_{n}^{\alpha_{n}}\coloneqq\operatorname*{argmin}_{b\in\bar{B}(P_{n-1})}\|b-b_{n-2}^{\alpha_{n-2}}\|,\forall n>h,\label{proposal strategy optimal sharpe}
\end{equation}
where $\bar{B}\left(P_{n}\right)\coloneqq\big\{ b_{n}^{\bar{\alpha}}:\,\bar{\alpha}\in\mathcal{A}\text{ and }\mathcal{U}\big(b_{n}^{\bar{\alpha}},P_{n}\big)\geq\mathcal{U}\big(b_{n}^{\alpha},P_{n}\big),\forall\alpha\in\mathcal{A}\big\}$
for all $n\geq h$ with the function $\mathcal{U}\big(b,Q\big)$ chosen
either as Sharpe ratio or expected logarithmic return as below:
\begin{equation}
\mathcal{U}\big(b,Q\big)\coloneqq{\displaystyle \frac{\mathbb{E}^{Q}\big(\big<b,X\big>\big)}{\sqrt{\smash[b]{\Var^{Q}\big(\big<b,X\big>\big)}}}},\text{ or }\mathcal{U}\big(b,Q\big)\coloneqq\mathbb{E}^{Q}\big(\log\big<b,X\big>\big),\forall n.\smallskip\label{Function U}
\end{equation}
By this construction, the investor updates the risk aversion at each
period by selecting the optimal one from the previous period, regarding
a determined objective, without knowing the limiting distribution.
The guarantee of performance for the M-V strategy $\big(b_{n}^{\alpha_{n}}\big)$
is affirmed in Corollary \ref{Corollary 1}.
\begin{cor}
For an infinite sequence of realizations $x_{1}^{\infty}$ with similar
assumptions as in Theorem \ref{Theorem 1}, consider a set of risk
aversions $\mathcal{A}$ and the M-V strategy $\big(b_{n}^{\alpha_{n}}\big)$
constructed according to (\ref{proposal strategy optimal sharpe}).
Then, either one of the following limits holds:\label{Corollary 1}
\[
\lim_{n\to\infty}Sh_{n}\big(b_{n}^{\alpha_{n}}\big)=\frac{\mathbb{E}^{P_{\infty}}\big(\big<b^{\alpha^{*}},X\big>\big)}{\sqrt{\smash[b]{\Var^{P_{\infty}}\big(\big<b^{\alpha^{*}},X\big>\big)}}}{\displaystyle \text{ or }}\lim_{n\to\infty}W{}_{n}\big(b_{n}^{\alpha_{n}}\big)=\mathbb{E}^{P_{\infty}}\big(\log\big<b^{\alpha^{*}},X\big>\big),
\]
corresponding to the choice of the function $\mathcal{U}\big(b,Q\big)$
as in (\ref{Function U}), where:
\[
b^{\alpha^{*}}\in B\left(\alpha^{*},P_{\infty}\right),\alpha^{*}\in\mathcal{A}\text{ such that }\mathcal{U}\big(b^{\alpha^{*}},P_{\infty}\big)\geq\mathcal{U}\big(b^{\alpha},P_{\infty}\big),\forall b^{\alpha}\in B\left(\alpha,P_{\infty}\right),\forall\alpha\in\mathcal{A}.
\]
In other words, under the limiting distribution $P_{\infty}$, the
M-V portfolio $b^{\alpha^{*}}$ has either the optimal Sharpe ratio
or the optimal growth rate among the involved M-V portfolios.
\end{cor}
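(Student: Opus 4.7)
My plan is to exploit the finiteness of $\mathcal{A}$ to reduce the adaptive algorithm to one of the constant-risk-aversion subroutines already handled by Theorem \ref{Theorem 1}. First, I would apply Theorem \ref{Theorem 1} separately to every $\alpha \in \mathcal{A}$: the no-redundancy assumption on $\Sigma^{P_\infty}$ makes $B(\alpha, P_\infty) = \{b^\alpha\}$ a singleton for each $\alpha$, so $b_n^\alpha \to b^\alpha$ and the empirical quantities $M_n(b_n^\alpha), V_n(b_n^\alpha), W_n(b_n^\alpha)$ converge to their $P_\infty$-counterparts. Finiteness of $\mathcal{A}$ upgrades these to convergences that are uniform in $\alpha$. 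From this, for either choice of $\mathcal{U}$ in (\ref{Function U}), I would deduce that $\mathcal{U}(b_n^\alpha, P_n) \to \mathcal{U}(b^\alpha, P_\infty)$ for every $\alpha \in \mathcal{A}$: the Sharpe-ratio version uses continuity of the ratio at points where $\Var^{P_\infty}(\langle b^\alpha, X\rangle) > 0$ (again by no-redundancy), while the expected-logarithm version follows from weak convergence $P_n \to P_\infty$ combined with $b_n^\alpha \to b^\alpha$ and the same Cesaro-type reasoning behind Lemma \ref{lemma 2}.

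Next, set $\mathcal{A}^* := \operatorname*{argmax}_{\alpha \in \mathcal{A}} \mathcal{U}(b^\alpha, P_\infty)$. Because $\mathcal{A}$ is finite, there is a strict gap $\eta > 0$ satisfying $\mathcal{U}(b^{\alpha^*}, P_\infty) \geq \mathcal{U}(b^\alpha, P_\infty) + \eta$ for every $\alpha \in \mathcal{A} \setminus \mathcal{A}^*$, so the uniform convergence from the first step forces $\bar{B}(P_{n-1}) \subseteq \{b_{n-1}^\alpha : \alpha \in \mathcal{A}^*\}$ for all $n$ beyond some threshold, hence $\alpha_n \in \mathcal{A}^*$ eventually. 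I would then argue that the tie-breaker $b_n^{\alpha_n} = \operatorname*{argmin}_{b \in \bar{B}(P_{n-1})} \|b - b_{n-2}^{\alpha_{n-2}}\|$ commits $\alpha_n$ to a single $\tilde\alpha \in \mathcal{A}^*$ for all $n \geq N$: choose $\delta > 0$ smaller than any pairwise distance between distinct limits $b^\alpha, b^{\alpha'}$ with $\alpha, \alpha' \in \mathcal{A}^*$; once every $b_{n-1}^\alpha$ and $b_{n-2}^\alpha$ lies within $\delta/4$ of its respective $b^\alpha$, the previous iterate $b_{n-2}^{\alpha_{n-2}}$ is strictly closer to the candidate indexed by $\alpha_{n-2}$ than to any other, so an induction gives $\alpha_n = \alpha_{n-2}$. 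In the degenerate sub-case where several $\alpha \in \mathcal{A}^*$ yield the same $b^\alpha$, commitment is unnecessary because $\|b_n^{\alpha_n} - b_n^{\tilde\alpha}\| \to 0$ for any fixed $\tilde\alpha \in \mathcal{A}^*$, and Lemma \ref{lemma 2} transfers the conclusion. In both situations, Theorem \ref{Theorem 1} applied to $\tilde\alpha$ combined with Lemma \ref{lemma 2} yields the asserted convergence, using $\mathcal{U}(b^{\tilde\alpha}, P_\infty) = \mathcal{U}(b^{\alpha^*}, P_\infty)$.

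The main obstacle I anticipate is the tie-breaking argument. The two-period lag in $\operatorname*{argmin}_{b \in \bar{B}(P_{n-1})} \|b - b_{n-2}^{\alpha_{n-2}}\|$ means that during the initial phase, the chosen $\alpha_{n-2}$ may still lie outside $\mathcal{A}^*$, and one has to verify that after $\alpha_n$ first enters $\mathcal{A}^*$ no subsequent update leaves it and the choice stabilizes. The saving grace is the uniform convergence $\max_{\alpha \in \mathcal{A}} \|b_n^\alpha - b^\alpha\| \to 0$, which eventually makes the $\delta$-neighborhoods of the distinct limits $\{b^\alpha : \alpha \in \mathcal{A}^*\}$ pairwise disjoint and invariant under the update rule, so that once the iterate enters one such neighborhood it cannot escape; the lag then contributes only a short bookkeeping step rather than a new idea.
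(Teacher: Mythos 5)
Your proposal follows essentially the same route as the paper's own proof: apply Theorem \ref{Theorem 1} to each $\alpha\in\mathcal{A}$, use finiteness of $\mathcal{A}$ to obtain a positive gap $\Delta$ so that $\bar{B}(P_{n-1})$ eventually contains only candidates indexed by the optimal set $\mathcal{A}^{*}$, then use the separation of the distinct limits $b^{\alpha}$ together with the argmin tie-breaker to show the adaptive strategy commits to a single limiting portfolio (handling the coincident-limit case via Lemma \ref{lemma 2}), and finally transfer the empirical Sharpe ratio or growth rate by Lemma \ref{lemma 2}. The one point both arguments treat lightly is that the argmin is restricted to $\bar{B}(P_{n-1})$, which for finite $n$ may omit the currently selected candidate when the maximizer of $\mathcal{U}(\cdot,P_{n-1})$ oscillates within $\mathcal{A}^{*}$, but this is a shared feature rather than a divergence from the paper.
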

\begin{proof}
By Theorem \ref{Theorem 1}, for any M-V strategy $\big(b_{n}^{\alpha}\big)$
associated with an $\alpha\in\mathcal{A}$, we have $b_{n}^{\alpha}\to b^{\alpha}\in B\big(\alpha,P_{\infty}\big)$,
and subsequently $Sh_{n}\big(b_{n}^{\alpha}\big)\to\mathbb{E}^{P_{\infty}}\big(\big<b^{\alpha},X\big>\big)/\sqrt{\smash[b]{\Var^{P_{\infty}}\big(\big<b^{\alpha},X\big>\big)}}$
and $W_{n}\big(b_{n}^{\alpha}\big)\to\mathbb{E}^{P_{\infty}}\big(\log\big<b^{\alpha},X\big>\big)$,
i.e., $\mathcal{U}\big(b_{n}^{\alpha},P_{n}\big)\to\mathcal{U}\big(b^{\alpha},P_{\infty}\big)$
for the objective function $\mathcal{U}\big(b,Q\big)$ chosen as either
the Sharpe ratio or the expected logarithmic return. The construction
of the M-V strategy $\big(b_{n}^{\alpha_{n}}\big)$ according to (\ref{proposal strategy optimal sharpe})
enables identifying the best M-V strategies among the involved ones.
Specifically, regarding a given objective function $\mathcal{U}\big(b,Q\big)$,
define the set of coefficients with an associated strategy attaining
the optimal limit $\mathcal{U}\big(b^{\alpha^{*}},P_{\infty}\big)$
as follows:
\[
\mathcal{A}^{*}\coloneqq\big\{\alpha\in\mathcal{A}:\,\lim_{n\to\infty}\mathcal{U}\big(b_{n}^{\alpha},P_{n}\big)=\mathcal{U}\big(b^{\alpha},P_{\infty}\big)\text{ such that }\mathcal{U}\big(b^{\alpha},P_{\infty}\big)=\mathcal{U}\big(b^{\alpha^{*}},P_{\infty}\big)\big\}.
\]
Then, let $\Delta\coloneqq\min_{\alpha\notin\mathcal{A}^{*}}|\mathcal{U}\big(b^{\alpha^{*}},P_{\infty}\big)-\mathcal{U}\big(b^{\alpha},P_{\infty}\big)|>0$
denote the minimum distance between the optimal limit and other sub-optimal
ones. Moreover, since there exists $N^{\delta}$ such that $|\mathcal{U}\big(b_{n}^{\alpha},P_{n}\big)-\mathcal{U}\big(b^{\alpha},P_{\infty}\big)|<\delta$
for all $n>N^{\delta}$ and all $\alpha\in\mathcal{A}$, for any $\delta>0$;
thus, there exists $N^{\Delta/2}$ such that $\mathcal{U}\big(b_{n}^{\hat{\alpha}},P_{n}\big)>\mathcal{U}\big(b_{n}^{\bar{\alpha}},P_{n}\big)$
for any $\hat{\alpha}\in\mathcal{A}^{*}$ and $\bar{\alpha}\notin\mathcal{A}^{*}$,
for all $n>N^{\Delta/2}$. Therefore, the sets $\bar{B}\left(P_{n}\right)$
only include the portfolios $b_{n}^{\alpha}$ such that $\alpha\in\mathcal{A}^{*}$
for all $n>N^{\Delta/2}$.\smallskip 

Although the set $\mathcal{A}^{*}$ may contain more than one coefficient,
the M-V strategy $\big(b_{n}^{\alpha_{n}}\big)$, in which each portfolio
$b_{n}^{\alpha_{n}}$ is chosen as one among $b_{n}^{\alpha}$ at
time $n$, must converge to one limiting point $\big(b^{\alpha}\big)$
where $\alpha\in\mathcal{A}^{*}$. In detail, for each $\alpha\in\mathcal{A}^{*}$,
since convergent sequences are also Cauchy sequences, there exists
$N^{\epsilon}$ for arbitrarily small $\epsilon>0$ such that $\|b_{n}^{\alpha}-b_{n+1}^{\alpha}\|<\epsilon$
for all $n>N^{\epsilon}$. Consider any two M-V strategies $\big(b_{n}^{\hat{\alpha}}\big)$
and $\big(b_{n}^{\bar{\alpha}}\big)$ with two respective limiting
portfolios $b^{\hat{\alpha}}$ and $b^{\bar{\alpha}}$ such that $\|b^{\hat{\alpha}}-b^{\bar{\alpha}}\|>0$,
we have the following inequality:
\begin{align*}
 & \|b^{\hat{\alpha}}-b_{n}^{\hat{\alpha}}\|+\|b_{n}^{\hat{\alpha}}-b_{n}^{\bar{\alpha}}\|+\|b_{n}^{\bar{\alpha}}-b^{\bar{\alpha}}\|\geq\|b^{\hat{\alpha}}-b^{\bar{\alpha}}\|,\forall n>h,\\
\Rightarrow & \liminf_{n\to\infty}\|b_{n}^{\hat{\alpha}}-b_{n}^{\bar{\alpha}}\|\geq\|b^{\hat{\alpha}}-b^{\bar{\alpha}}\|>0.
\end{align*}
Hence, since $b_{n}^{\alpha_{n}}\coloneqq\operatorname*{argmin}_{b\in\bar{B}\left(P_{n-1}\right)}\|b-b_{n-2}^{\alpha_{n-2}}\|$
by construction, it cannot switch between two strategies infinitely
often once $\|b_{n}^{\alpha}-b_{n+1}^{\alpha}\|<\epsilon\leq\|b^{\hat{\alpha}}-b^{\bar{\alpha}}\|$
for $\alpha\in\big\{\hat{\alpha},\bar{\alpha}\big\}$ but must keep
choosing only one strategy among $\big(b_{n}^{\hat{\alpha}}\big)$
and $\big(b_{n}^{\bar{\alpha}}\big)$ for all $n\geq\max\big\{ N^{\epsilon},N^{\Delta/2}\big\}+2$.
Otherwise, it is trivial that the M-V strategy $\big(b_{n}^{\alpha_{n}}\big)$
converges to a limiting portfolio if $\|b^{\hat{\alpha}}-b^{\bar{\alpha}}\|=0$.\smallskip

With the convergence of the M-V strategy $\big(b_{n}^{\alpha_{n}}\big)$
to one limiting portfolio of the M-V strategies $\big(b_{n}^{\alpha}\big)$,
denote the limiting point of that M-V strategy as $b^{\alpha^{*}}$
for an $\alpha^{*}\in\mathcal{A}$. Similar to the proof of Theorem
\ref{Theorem 1}, by using Lemma \ref{lemma 2}, we obtain either
one of the following results depending on the manual choice of the
function $\mathcal{U}\big(b,Q\big)$ as defined in (\ref{Function U}):
\[
\lim_{n\to\infty}Sh_{n}\big(b_{n}^{\alpha_{n}}\big)=\frac{\mathbb{E}^{P_{\infty}}\big(\big<b^{\alpha^{*}},X\big>\big)}{\sqrt{\smash[b]{\Var^{P_{\infty}}\big(\big<b^{\alpha^{*}},X\big>\big)}}}\text{ or }\lim_{n\to\infty}W{}_{n}\big(b_{n}^{\alpha_{n}}\big)=\mathbb{E}^{P_{\infty}}\big(\log\big<b^{\alpha^{*}},X\big>\big).\smallskip
\]

We finalize the proof by showing that the constant M-V strategy $\big(b^{\alpha^{*}}\big)$
attains either the optimal expected Sharpe ratio or the optimal expected
logarithmic return among M-V portfolios corresponding to $\alpha\in\mathcal{A}$
with the limiting distribution $P_{\infty}$. Indeed, by the definition
of the sets $\bar{B}\left(P_{n}\right)$ and the convergence $b_{n}^{\alpha_{n}}\to b^{\alpha^{*}}$,
we obtain the needed result as follows:
\begin{align*}
 & \mathcal{U}\big(b_{n}^{\alpha_{n}},P_{n}\big)\geq\mathcal{U}\big(b_{n}^{\alpha},P_{n}\big),\,\forall\alpha\in\mathcal{A},\forall n\geq h,\\
\Rightarrow & \mathcal{U}\big(b^{\alpha^{*}},P_{\infty}\big)=\lim_{n\to\infty}\mathcal{U}\big(b_{n}^{\alpha_{n}},P_{n}\big)\geq\lim_{n\to\infty}\mathcal{U}\big(b_{n}^{\alpha},P_{n}\big)=\mathcal{U}\big(b^{\alpha},P_{\infty}\big),\,\forall b^{\alpha}\in B\left(\alpha,P_{\infty}\right),\forall\alpha\in\mathcal{A}.
\end{align*}
Here, the convergence $\mathcal{U}\big(b_{n}^{\alpha_{n}},P_{n}\big)\to\mathcal{U}\big(b^{\alpha^{*}},P_{\infty}\big)$
for the case where $\mathcal{U}\big(b,Q\big)$ is chosen as the Sharpe
ratio follows from that $\big<b_{n}^{\alpha_{n}},\mu^{P_{n}}\big>\big/\big<b_{n}^{\alpha_{n}},\Sigma^{P_{n}}b_{n}^{\alpha_{n}}\big>^{1/2}\to\big<b^{\alpha^{*}},\mu^{P_{\infty}}\big>\big/\big<b^{\alpha^{*}},\Sigma^{P_{\infty}}b^{\alpha^{*}}\big>^{1/2}$
as $P_{n}\to P_{\infty}$ weakly. Meanwhile, when $\mathcal{U}\big(b,Q\big)$
is chosen as the expected logarithmic return, we demonstrate the convergence
$\mathcal{U}\big(b_{n}^{\alpha_{n}},P_{n}\big)\to\mathcal{U}\big(b^{\alpha^{*}},P_{\infty}\big)$
as follows: since the sequence $x_{1}^{\infty}$ is bounded in $\mathcal{M}$,
define $x^{n}\coloneqq\max_{x\in\mathcal{M}}\big\vert\log\big<b_{n}^{\alpha_{n}},x\big>-\log\big<b^{\alpha^{*}},x\big>\big\vert$
as the value maximizing the distance between the logarithmic returns
of the two portfolios $b_{n}^{\alpha_{n}}$ and $b^{\alpha^{*}}$.
Thus, by the triangle inequality and the convergence $b_{n}^{\alpha_{n}}\to b^{\alpha^{*}}$,
we have:
\begin{align*}
0\leq\lim_{n\to\infty}\big\vert\mathbb{E}^{P_{n}}\big(\log\big<b_{n}^{\alpha_{n}},X\big>-\log\big<b^{\alpha^{*}},X\big>\big)\big\vert & \leq\lim_{n\to\infty}\frac{1}{n}\sum_{i=1}^{n}\big\vert\log\big<b_{n}^{\alpha_{n}},x_{i}\big>-\log\big<b^{\alpha^{*}},x_{i}\big>\big\vert\\
 & \leq\lim_{n\to\infty}\big\vert\log\big<b_{n}^{\alpha_{n}},x^{n}\big>-\log\big<b^{\alpha^{*}},x^{n}\big>\big\vert=0,
\end{align*}
so we obtain the desired limit by the weak convergence $P_{n}\to P_{\infty}$
as follows:
\[
\lim_{n\to\infty}\mathbb{E}^{P_{n}}\big(\log\big<b_{n}^{\alpha_{n}},X\big>\big)=\lim_{n\to\infty}\mathbb{E}^{P_{n}}\big(\log\big<b^{\alpha^{*}},X\big>\big)=\mathbb{E}^{P_{\infty}}\big(\log\big<b^{\alpha^{*}},X\big>\big).
\]
All the required results are obtained, completing the proof.
\end{proof}
\begin{rem*}
If the investor opts to seek the M-V portfolio with the optimal Sharpe
ratio over the simplex $\mathcal{B}^{m}$, the risk aversion set within
$\left[0,a\right]$ should be sufficiently large but not too large,
since $a\to\infty$ leads to overly risk-averse behavior, which would
significantly reduce expected values while increasing the variance
of a portfolio's return. The case of normally distributed assets'
returns, mentioned in the previous section, can be considered as an
example of this property.
\end{rem*}

\section{Universality of strategies in stationary market}

In this section, we discuss the universality of the proposed strategies.
Universality refers to the guaranteed performance of an M-V strategy
in any possible market scenario, assuming that the data is treated
as realizations from a random process, whose infinite-dimensional
distribution is also unknown to the investor, rather than a deterministic
sequence as considered in the previous section. In the general case,
other than the example of an i.i.d. process mentioned in the previous
section that implies a unique limiting distribution, if the realizations
of assets' returns come from a stationary process, Corollary \ref{Corollary 2}
asserts the weak convergence of empirical distributions to a random
limiting one for almost all instances of the infinite sequence of
realizations $x_{1}^{\infty}$. Therefore, it covers a broad range
of cases, including the i.i.d. process and the stationary finite-order
Markov process. It is worth noting that the summability condition,
which is required for the underlying algorithms of the proposed strategies
to work, is broader than stationarity.
\begin{cor}
For any stationary process of assets' returns $\left\{ X_{n}\right\} _{n=1}^{\infty}$
defined on a probability space $\left(\Omega,\mathbb{F},\mathbb{\mathbb{P}}\right)$,
the sequence of empirical distributions $\big\{ P_{n}\big\}_{n=1}^{\infty}$
converges weakly to the (random) limiting distribution $P_{\infty}\big(X\big)\coloneqq\mathbb{\mathbb{P}}\big(X_{1}|\mathcal{I}\big)$,
where $\mathcal{I}\coloneqq\big\{ F\in\mathbb{F}:\,T^{-1}F=F\big\}$,
almost surely. Consequently, Theorem \ref{Theorem 1} and Corollary
\ref{Corollary 1} also hold almost surely.\label{Corollary 2}
\end{cor}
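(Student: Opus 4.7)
The plan is to deduce the almost sure weak convergence $P_n \to P_\infty$ from Birkhoff's pointwise ergodic theorem and then invoke the already-proved Theorem \ref{Theorem 1} and Corollary \ref{Corollary 1} pathwise on the almost sure set. First, I would apply Birkhoff's theorem to the stationary process: for any integrable $g:\mathcal{M}\to\mathbb{R}$, the time average $\frac{1}{n}\sum_{i=1}^{n} g(X_i)$ converges almost surely to $\mathbb{E}\big(g(X_1)\,|\,\mathcal{I}\big)$, where $\mathcal{I}$ is the $T$-invariant sigma-algebra. In particular, for any bounded continuous $g$, this gives $\int g\,dP_n \to \int g\,dP_\infty$ almost surely, with $P_\infty(\cdot)=\mathbb{P}(X_1\in\cdot\,|\,\mathcal{I})$ interpreted as a regular conditional distribution; the latter is well defined because $\mathcal{M}$ is a Borel subset of $\mathbb{R}^m$, hence a standard Borel space.

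Next comes the main technical step: lifting this function-by-function almost sure convergence to joint weak convergence on a single full-measure event. Because $\mathcal{M}\subset\mathbb{R}^m$ is bounded and separable, I would fix a countable convergence-determining subfamily $\mathcal{F}\subset C_b(\mathcal{M})$ (for instance, a countable uniformly dense subset of continuous functions on the closure $\overline{\mathcal{M}}$, or the family of polynomials in the coordinates with rational coefficients, which separates probability measures on bounded sets by Stone--Weierstrass). For each $g\in\mathcal{F}$, Birkhoff's theorem gives a null exceptional set $N_g$; set $N:=\bigcup_{g\in\mathcal{F}} N_g$, which is still null. On $\Omega\setminus N$, convergence $\int g\,dP_n\to\int g\,dP_\infty$ holds for every $g\in\mathcal{F}$ simultaneously, and a standard $3\varepsilon$-argument extends this to all $g\in C_b(\mathcal{M})$, so $P_n\to P_\infty$ weakly by the Portmanteau theorem.

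Having almost sure weak convergence in hand, the summability condition underlying Theorem \ref{Theorem 1} and Corollary \ref{Corollary 1} holds on the full-measure event $\Omega\setminus N$. Each of those two results was proved pathwise for a deterministic sequence $x_1^\infty$ whose empirical distributions converge weakly; applying them to each $\omega\in\Omega\setminus N$ (intersected with the event on which $\Sigma^{P_\infty(\omega)}$ is positive definite, i.e.\ the no-redundant-assets condition) yields the claimed almost sure convergence of the empirical M-V utility, Sharpe ratio, and growth rate of the proposed strategies to the corresponding quantities of the (random) limiting M-V portfolio $b^\alpha \in B(\alpha,P_\infty(\omega))$ or $b^{\alpha^*}\in B(\alpha^*,P_\infty(\omega))$.

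The main obstacle is the uniform-in-test-function step: one must be careful that the exceptional null set does not depend on the test function $g$, which is why a countable determining family is essential. A secondary subtlety is the existence of the regular conditional distribution $P_\infty(\omega,\cdot)$ as a bona fide random probability measure on $\mathcal{M}$; this is where the standard-Borel structure of $\mathcal{M}$ is used. Everything else—the pathwise application of Theorem \ref{Theorem 1} and Corollary \ref{Corollary 1}—follows directly from results already established.
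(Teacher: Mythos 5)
Your proposal is correct and follows essentially the same route as the paper: Birkhoff's pointwise ergodic theorem applied to bounded continuous test functions gives $P_n\to P_\infty=\mathbb{P}(X_1|\mathcal{I})$ weakly almost surely, after which Theorem \ref{Theorem 1} and Corollary \ref{Corollary 1} are invoked pathwise on the almost sure event. Your explicit reduction to a countable convergence-determining family so that the exceptional null set is independent of the test function is a technical point the paper's proof passes over silently, and it is handled correctly in your argument.
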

\begin{proof}
Since $\left\{ X_{n}\right\} _{n=1}^{\infty}$ is stationary, by applying
Birkhoff's pointwise ergodic theorem for all bounded and continuous
real functions $f\left(x\right)$, the following limit holds:
\[
\lim_{n\to\infty}\mathbb{E}^{P_{n}}\big(f\big(X\big)\big)=\lim_{n\to\infty}\frac{1}{n}\sum_{i=1}^{n}f\big(X_{i}\big)=\mathbb{E}\big(f\big(X_{1}\big)|\mathcal{I}\big)\eqqcolon\mathbb{E}^{P_{\infty}}\big(f\big(X\big)\big)\text{, a.s,}
\]
thus, $P_{n}\to P_{\infty}$ weakly almost surely. Given the invariant
$\sigma$-field $\mathcal{I}$, the conditional distribution $P_{\infty}$
is the limiting distribution of the sequence of empirical distributions
$\big\{ P_{n}\big\}_{n=1}^{\infty}$, which is correspondingly defined
on each instance of the sequence of realizations $x_{1}^{\infty}$.
The first and second moments of $X$ are finite with respect to the
limiting distribution $P_{\infty}$, so both Theorem \ref{Theorem 1}
and Corollary \ref{Corollary 1} hold almost surely.
\end{proof}

\subsection{Performance comparison with the Bayesian strategy}

In the case of a stochastic process of assets' returns, we are interested
in the long-term behavior of the empirical performance of an M-V strategy
that utilizes the true conditional distribution at any time during
the investment, i.e., the portfolio selection at each period is decided
with a different risk profile. There appears to be no theoretical
research in the literature on M-V portfolio selection addressing this
question, although it is essential, as an investor may argue that
a dynamic M-V strategy can achieve higher empirical M-V utility, Sharpe
ratio, and growth rate when past market information is utilized for
portfolio construction. However, in the next Theorem \ref{Theorem 2},
besides establishing the relevant limiting performances, we demonstrate
a particular case of a stationary market process without redundant
assets, where a dynamic M-V strategy, employing the model parameters
corresponding to the true conditional distribution based on observed
market information, almost surely does not yield higher empirical
M-V utility, Sharpe ratio, and growth rate than certain simple constant
M-V strategies. In other words, the results imply that when the investor
tries to estimate exactly the M-V portfolios with respect to the true
conditional distributions, given the increasing observed information,
there are situations in which such decisions, even when exact estimations
are attainable, are not beneficial but very costly.\smallskip

\textbf{Bayesian M-V strategy with constant risk aversion}. Consider
the process $\left\{ X_{n}\right\} _{n=1}^{\infty}$ defined on a
probability space $\left(\Omega,\mathbb{F},\mathbb{\mathbb{P}}\right)$,
at each time $n\geq2$, the investor observes past realizations of
$X_{1}^{n-1}$ and selects a portfolio using the parameters $\mu^{\mathbb{\mathbb{P}}(X_{n}|\sigma(X_{1}^{n-1}))}$
and $\Sigma^{\mathbb{\mathbb{P}}(X_{n}|\sigma(X_{1}^{n-1}))}$ of
the M-V model, corresponding to the true regular conditional distribution
$\mathbb{\mathbb{P}}\big(X_{n}|\sigma\big(X_{1}^{n-1}\big)\big)$
given $\sigma\big(X_{1}^{n-1}\big)$ (noting that any random variable
defined on a perfect probability space can admit a regular conditional
distribution given any sub-$\sigma$-field, as demonstrated in \citet{Jirina1954}).
Here, the sub-$\sigma$-field $\sigma\big(X_{1}^{n-1}\big)\subseteq\mathbb{F}$
encompasses the past information known to the investor until time
$n$ before selecting a portfolio, which increases as $n\to\infty$
up to a limiting $\sigma$-field $\sigma\big(X_{1}^{\infty}\big)\subseteq\mathbb{F}$
that represents the maximal information available. Given a risk aversion
$\alpha$, let us define the so-called Bayesian M-V strategy, denoted
as $\big(b_{Q_{n}}^{\alpha}\big)$ henceforth, as follows:
\begin{equation}
b_{Q_{1}}^{\alpha}\in B\big(\alpha,Q_{1}\big)\text{ and }b_{Q_{n}}^{\alpha}\in B\big(\alpha,Q_{n}\big),\forall n\geq2,\label{Bayesian strategy}
\end{equation}
where $Q_{1}\coloneqq\mathbb{\mathbb{P}}\big(X_{1}\big)$ is conventionally
taken as the unconditional distribution of the first variable $X_{1}$
without past information, while $Q_{n}\coloneqq\mathbb{\mathbb{P}}\big(X_{n}|\sigma\big(X_{1}^{n-1}\big)\big)$
represents the regular conditional distribution of $X_{n}$ given
past information. Noting that the set $B\big(\alpha,Q_{n}\big)$ is
non-empty as the first two moments of bounded variables $X_{n}$ are
finite for any conditional distribution, as stated in Lemma \ref{Lemma 1}.\smallskip

Before stating the theorem, let us highlight some important properties
of stationarity and time-reversibility. By Kolmogorov\textquoteright s
theorem, a stationary process $\left\{ X_{n}\right\} _{n=1}^{\infty}$
can be embedded into a two-sided stationary process $\left\{ X_{n}\right\} _{-\infty}^{\infty}$
with the origin at $X_{1}$ and an invertible measure-preserving shift
operator, denoted by $T$. Therefore, stationarity implies the equality
of regular conditional distributions $\mathbb{\mathbb{P}}\big(X_{n}|X_{1}^{n-1}=x_{1}^{n-1}\big)=\mathbb{\mathbb{P}}\big(X_{0}|X_{1-n}^{-1}=x_{1-n}^{-1}\big)$
for all $n\geq2$, given the same event of realization sequences $x_{1}^{n-1}=x_{1-n}^{-1}$
(index-wise equality). A stricter variant of a stationary process
is a time-reversible process. Concretely, a process $\left\{ X_{n}\right\} _{n=1}^{\infty}$
is deemed time-reversible if $\mathbb{P}\big(X_{1}^{n}\big)\coloneqq\mathbb{P}\big(X_{1},...,X_{n}\big)=\mathbb{P}\big(X_{n},...,X_{1}\big)=\mathbb{P}\big(X_{n}^{1}\big)$
for all $n\geq2$, i.e., the joint distribution of $X_{1}^{n}$ is
the same as that of its reversed sequence $X_{1}^{n}$. Roughly speaking,
given a common present observation, the conditional distribution of
the future given the present is identical to that of the past given
the present when viewed in reverse time; for instance, any univariate
stationary Gaussian process satisfies time-reversibility. Moreover,
a time-reversible process is always stationary, but the reverse implication
does not generally hold; thus, a time-reversible process can also
be extended into a two-sided stationary process. Consequently, under
the time-reversible process $\left\{ X_{n}\right\} _{n=1}^{\infty}$,
any sequence of variables $X_{k}^{n}$ satisfies $\mathbb{P}\big(X_{k}^{n}\big)=\mathbb{P}\big(X_{n}^{k}\big)=\mathbb{P}\big(X_{-n}^{-k}\big)$
for all $n\geq1$ and $0\leq k<n$, which leads to $\mathbb{\mathbb{P}}\big(X_{0}|\sigma\big(X_{1-n}^{-1}\big)\big)=\mathbb{\mathbb{P}}\big(X_{0}|\sigma\big(X_{1}^{n-1}\big)\big)=\mathbb{\mathbb{P}}\big(X_{n}|\sigma\big(X_{1}^{n-1}\big)\big)$
for all $n\geq2$.\smallskip
\begin{thm}
Consider the market as a process of assets' returns $\left\{ X_{n}\right\} _{n=1}^{\infty}$
defined on the canonical probability space $\left(\Omega,\mathbb{F},\mathbb{\mathbb{P}}\right)$
and its corresponding empirical distributions $\big\{ P_{n}\big\}_{n=1}^{\infty}$.
Assume that the market does not have redundant assets with respect
to the regular conditional distribution $\bar{Q}_{\infty}\coloneqq\mathbb{\mathbb{P}}\big(X_{1}|\sigma\big(X_{-\infty}^{0}\big)\big)$
almost surely, the following properties hold for any risk aversion
$\alpha$:\label{Theorem 2}
\begin{enumerate}
\item If the process $\left\{ X_{n}\right\} _{n=1}^{\infty}$ is stationary,
then the empirical M-V utility of the Bayesian M-V strategy $\big(b_{Q_{n}}^{\alpha}\big)$,
according to (\ref{Bayesian strategy}), converges almost surely to
a random limit as follows:
\begin{align*}
 & \lim_{n\to\infty}\big(M_{n}\big(b_{Q_{n}}^{\alpha}\big)-\alpha V_{n}\big(b_{Q_{n}}^{\alpha}\big)\big)\\
=\, & \mathbb{E}\big(\big<b_{\bar{Q}_{\infty}}^{\alpha},X_{1}\big>|\mathcal{I}\big)-\alpha\mathbb{E}\big(\big<b_{\bar{Q}_{\infty}}^{\alpha},X_{1}\big>-\mathbb{E}\big(\big<b_{\bar{Q}_{\infty}}^{\alpha},X_{1}\big>|\mathcal{I}\big)|\mathcal{I}\big)^{2},\text{ a.s},
\end{align*}
where $\big<b_{\bar{Q}_{\infty}}^{\alpha},X_{1}\big>$ is treated
as a random variable, with $b_{\bar{Q}_{\infty}}^{\alpha}\in B\big(\alpha,\bar{Q}_{\infty}\big)$,
and $\mathcal{I}\coloneqq\big\{ F\in\mathbb{F}:\,T^{-1}F=F\big\}$
denotes the invariant sub-$\sigma$-field of $\mathbb{F}$.\smallskip
\item If the process $\left\{ X_{n}\right\} _{n=1}^{\infty}$ is time-reversible,
then the random limiting empirical M-V utility of the Bayesian M-V
strategy $\big(b_{Q_{n}}^{\alpha}\big)$ as above is almost surely
not greater than the optimal M-V utility of a portfolio corresponding
to the random distribution $P_{\infty}$, as:
\begin{align*}
\lim_{n\to\infty}\big(M_{n}\big(b_{Q_{n}}^{\alpha}\big)-\alpha V_{n}\big(b_{Q_{n}}^{\alpha}\big)\big)\leq\max_{b\in\mathcal{B}^{m}}\mathcal{L}\big(\alpha,b,\mu^{P_{\infty}},\Sigma^{P_{\infty}}\big) & ,\text{ a.s},
\end{align*}
where $P_{\infty}$ denotes the random limit of the sequence of empirical
distributions $\big\{ P_{n}\big\}_{n=1}^{\infty}$.
\end{enumerate}
\end{thm}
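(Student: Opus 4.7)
I would extend the one-sided stationary process to a two-sided one (via Kolmogorov consistency) with measure-preserving shift $T$, and introduce the ``fully informed'' reference strategy $\tilde{b}_i^\alpha\in B(\alpha,\tilde{Q}_i)$ with $\tilde{Q}_i:=\mathbb{P}(X_i|\sigma(X_{-\infty}^{i-1}))$. Because the market has no redundant assets with respect to $\bar{Q}_\infty$ almost surely, Lemma \ref{Lemma 1}(a) combined with strict concavity makes $B(\alpha,\tilde{Q}_i)$ a singleton, and $\tilde{b}_i^\alpha=b_{\bar{Q}_\infty}^\alpha\circ T^{i-1}$, so $\{\langle\tilde{b}_i^\alpha,X_i\rangle\}_{i\in\mathbb{Z}}$ and its squares are stationary. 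Birkhoff's pointwise ergodic theorem then yields $n^{-1}\sum_{i=1}^{n}\langle\tilde{b}_i^\alpha,X_i\rangle\to\mathbb{E}(\langle b_{\bar{Q}_\infty}^\alpha,X_1\rangle|\mathcal{I})$ almost surely, and, applied to the squared returns, furnishes the conditional second-moment limit; together these determine the limiting empirical M-V utility for the reference strategy.

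\textbf{Bridging to the Bayesian strategy.} The identity $\|b_{Q_i}^\alpha-\tilde{b}_i^\alpha\|=V_i\circ T^{i-1}$, with $V_i:=\|b_{\mathbb{P}(X_1|\sigma(X_{2-i}^0))}^\alpha-b_{\bar{Q}_\infty}^\alpha\|$, follows from stationarity by applying $T^{-(i-1)}$. Since $\sigma(X_{2-i}^0)\uparrow\sigma(X_{-\infty}^0)$, L\'evy's upward theorem applied to a countable dense family of bounded continuous test functions gives weak almost sure convergence $\mathbb{P}(X_1|\sigma(X_{2-i}^0))\to\bar{Q}_\infty$; continuity of the M-V selector at the singleton $B(\alpha,\bar{Q}_\infty)$ via Lemma \ref{Lemma 1}(c) then yields $V_i\to 0$ a.s. To upgrade to Ces\`aro convergence of the shifted quantities $V_i\circ T^{i-1}$ (which is \emph{not} immediate since the shift moves the point as $i$ grows), I set $U_K:=\sup_{j\geq K}V_j$ --- bounded and $\downarrow 0$ a.s.\ as $K\to\infty$. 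Splitting $n^{-1}\sum_{i=1}^{n}V_i\circ T^{i-1}$ at index $K$, bounding the tail by the stationary sequence $U_K\circ T^{i-1}$, applying Birkhoff to the latter, and then letting $K\to\infty$ via dominated convergence for $\mathbb{E}(U_K|\mathcal{I})$ forces $n^{-1}\sum\|b_{Q_i}^\alpha-\tilde{b}_i^\alpha\|\to 0$ a.s. Combined with uniform boundedness of returns through Cauchy--Schwarz and the arithmetic manipulations of Lemma \ref{lemma 2}, this transfers the limits for $M_n$ and $V_n$ from the reference strategy to the Bayesian one, producing the stated formula.

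\textbf{Part (b) plan.} The extra ingredient is the time-reversibility identity $\mathbb{P}(X_n|\sigma(X_1^{n-1}))=\mathbb{P}(X_0|\sigma(X_1^{n-1}))$. Passing to the limit by L\'evy's theorem on the enlarging conditioning gives $\bar{Q}_\infty=\mathbb{P}(X_1|\sigma(X_{-\infty}^0))=\mathbb{P}(X_1|\sigma(X_2^\infty))$, so $\bar{Q}_\infty$, and hence $b_{\bar{Q}_\infty}^\alpha$, is measurable with respect to both the remote past and the remote future of $X_1$, i.e., with respect to $\mathcal{J}:=\sigma(X_{-\infty}^0)\cap\sigma(X_2^\infty)\supseteq\mathcal{I}$. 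Using this double measurability, I would tower the $\mathcal{I}$-conditional mean and variance of $\langle b_{\bar{Q}_\infty}^\alpha,X_1\rangle$ through $\mathcal{J}$ via the law of total variance, then invoke the maximality of $b_{\bar{Q}_\infty}^\alpha$ for the $\mathcal{J}$-conditional M-V problem together with convexity of $\max_b\mathcal{L}(\alpha,b,\mu,\Sigma)$ in $(\mu,\Sigma)$ (Jensen) to dominate $L^B$ by $\mathcal{L}(\alpha,b_{\bar{Q}_\infty}^\alpha,\mu^{P_\infty},\Sigma^{P_\infty})$, which is in turn $\leq\max_b\mathcal{L}(\alpha,b,\mu^{P_\infty},\Sigma^{P_\infty})$.

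\textbf{Expected main obstacle.} The hard step is part (b): absent extra mixing assumptions (e.g.\ the K-property), $\mathcal{J}$ may strictly contain $\mathcal{I}$, so $b_{\bar{Q}_\infty}^\alpha$ is generally \emph{not} $\mathcal{I}$-measurable and cannot just be substituted into the outer maximization. The argument must therefore carefully handle the cross-covariance terms between $b_{\bar{Q}_\infty}^\alpha$ and $\mu^{\bar{Q}_\infty}$ that appear when expanding $\Var(\langle b_{\bar{Q}_\infty}^\alpha,X_1\rangle|\mathcal{I})$, and show that the past--future symmetry conferred by time reversibility generates precisely the cancellation needed to absorb the Jensen gap between $\mathbb{E}[\max_b\mathcal{L}(\alpha,b,\mu^{\bar{Q}_\infty},\Sigma^{\bar{Q}_\infty})|\mathcal{I}]$ and $\max_b\mathcal{L}(\alpha,b,\mu^{P_\infty},\Sigma^{P_\infty})$.
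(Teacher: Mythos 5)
Your part (a) is essentially the paper's argument. The paper passes to the reversed conditional distributions $\bar{Q}_n=\mathbb{P}\big(X_1|\sigma\big(X_{2-n}^{0}\big)\big)$, gets $b^{\alpha}_{\bar{Q}_n}\to b^{\alpha}_{\bar{Q}_\infty}$ by L\'evy's theorem plus Lemma \ref{Lemma 1}, represents $b_{Q_n}^{\alpha}(\omega)=b_{\bar{Q}_n}^{\alpha}\big(T^{(n-1)}\omega\big)$, and invokes Breiman's generalized ergodic theorem for the $L^{1}$-dominated sequence. Your truncation $U_K=\sup_{j\geq K}V_j$ followed by Birkhoff on each stationary sequence $U_K\circ T^{i-1}$ and monotone/dominated convergence in $K$ is precisely a hand-made proof of that generalized ergodic theorem, so the substance is identical; you should only note (as the paper does via Kuratowski--Ryll-Nardzewski) that the finite-$n$ selections $b^{\alpha}_{\bar{Q}_n}$ must be chosen measurably, and that $V_n\to0$ holds for any selection because Lemma \ref{Lemma 1}(b) gives Hausdorff convergence to the singleton $B\big(\alpha,\bar{Q}_\infty\big)$.

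Part (b) has a genuine gap, and you have correctly diagnosed where it is but missed the observation that dissolves it. Time-reversibility gives $Q_n=\mathbb{P}\big(X_n|\sigma\big(X_1^{n-1}\big)\big)=\mathbb{P}\big(X_0|\sigma\big(X_1^{n-1}\big)\big)$: the conditioned variable is now \emph{fixed} and the conditioning $\sigma$-fields \emph{increase}, all evaluated at the same point $\omega$. L\'evy's upward theorem therefore gives weak convergence $Q_n\to Q_\infty=\mathbb{P}\big(X_0|\sigma\big(X_1^{\infty}\big)\big)$ almost surely \emph{pointwise} (not merely in a shifted/Ces\`aro sense as in part (a)), and non-redundancy makes $B\big(\alpha,Q_\infty\big)$ a singleton, so $b_{Q_n}^{\alpha}\to b_{Q_\infty}^{\alpha}$ a.s. The Bayesian strategy is thus asymptotically a constant (realization-dependent) portfolio, Lemma \ref{lemma 2} identifies its limiting empirical M-V utility with $\mathcal{L}\big(\alpha,b_{Q_\infty}^{\alpha},\mu^{P_\infty},\Sigma^{P_\infty}\big)$, and the bound by $\max_{b\in\mathcal{B}^m}\mathcal{L}\big(\alpha,b,\mu^{P_\infty},\Sigma^{P_\infty}\big)$ is immediate. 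No law-of-total-variance decomposition, no Jensen step, and no analysis of the $\sigma$-field $\mathcal{J}=\sigma\big(X_{-\infty}^{0}\big)\cap\sigma\big(X_2^{\infty}\big)$ is needed. Your proposed route founders exactly where you say it does: the Jensen inequality for $\max_b\mathcal{L}$ (convex in $(\mu,\Sigma)$) points the wrong way relative to the law of total variance ($\Sigma^{P_\infty}=\mathbb{E}\big(\Sigma^{\bar{Q}_\infty}|\mathcal{I}\big)+\Var\big(\mu^{\bar{Q}_\infty}|\mathcal{I}\big)$), and the cross terms between $b^{\alpha}_{\bar{Q}_\infty}$ and $X_1$ in $\Var\big(\big<b^{\alpha}_{\bar{Q}_\infty},X_1\big>|\mathcal{I}\big)$ do not cancel by symmetry alone; as written, part (b) is a sketch with a self-acknowledged hole rather than a proof.
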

\begin{proof}
To prove assertion (a), we first denote the unconditional distribution
as $\bar{Q}_{1}\coloneqq\mathbb{\mathbb{P}}\big(X_{1}\big)=Q_{1}$
by convention, and the regular conditional distribution as $\bar{Q}_{n}\coloneqq\mathbb{\mathbb{P}}\big(X_{1}|\sigma\big(X_{2-n}^{0}\big)\big)$,
which is admitted by $X_{1}$ given the inverse sub-$\sigma$-fields
$\sigma\big(X_{2-n}^{0}\big)\subseteq\mathbb{F}$, for all $n\geq2$.
Then, given the sequence of distributions $\big\{\bar{Q}_{n}\big\}_{n=1}^{\infty}$,
we accordingly define a dynamic M-V strategy $\big(b_{\bar{Q}_{n}}^{\alpha}\big)$
as follows:
\[
b_{\bar{Q}_{1}}^{\alpha}\in B\big(\alpha,\bar{Q}_{1}\big)=B\big(\alpha,Q_{1}\big)\text{ and }b_{\bar{Q}_{n}}^{\alpha}\in B\big(\alpha,\bar{Q}_{n}\big),\forall n\geq2.
\]
Moreover, we can apply the Kuratowski and Ryll-Nardzewski measurable
selection theorem to argue that the M-V portfolio $b_{\bar{Q}_{n}}^{\alpha}$
can be measurably selected depending on $\bar{Q}_{n}$ for any $n$,
since the set $B\big(\alpha,Q\big)$ is non-empty, compact, and convex
for any $Q\in\mathcal{P}$, while $\max_{b\in\mathcal{B}^{m}}\mathcal{L}\big(\alpha,b,\mu^{Q},\Sigma^{Q}\big)$
is a continuous function in $Q\in\mathcal{P}$ by Lemma \ref{Lemma 1},
where $\mathcal{P}$ is compact and metrizable when equipped with
the weak topology. Thus, this argument also implies that the portfolio
$b_{\bar{Q}_{\infty}}^{\alpha}$ can be measurably selected depending
on $\bar{Q}_{\infty}$.\smallskip

Then, given the inverse filtration $\big\{\sigma\big(X_{2-n}^{0}\big)\big\}_{n=2}^{\infty}$,
the weak convergence $\bar{Q}_{n}\to\bar{Q}_{\infty}$ holds almost
surely due to Levy's martingale convergence theorem for conditional
expectation as follows:
\begin{equation}
\lim_{n\to\infty}\mathbb{E}^{\bar{Q}_{n}}\big(f\big(X_{1}\big)\big)=\lim_{n\to\infty}\mathbb{E}\big(f\big(X_{1}\big)|\sigma\big(X_{2-n}^{0}\big)\big)=\mathbb{E}\big(f\big(X_{1}\big)|\sigma\big(X_{-\infty}^{0}\big)\big)\eqqcolon\mathbb{E}^{\bar{Q}_{\infty}}\big(f\big(X_{1}\big)\big),\text{ a.s,}\label{Levy}
\end{equation}
which applies for all bounded and continuous real functions $f(\cdot)$.
Moreover, due to the assumption of no redundant assets almost surely
with respect to the limiting distribution $\bar{Q}_{\infty}$ for
the market, the corresponding covariance matrix $\Sigma^{\bar{Q}_{\infty}}$
is positive definite almost surely, implying that the corresponding
set $B\big(\alpha,\bar{Q}_{\infty}\big)$ is a singleton. Therefore,
by Lemma \ref{Lemma 1}, we obtain the convergence $b_{\bar{Q}_{n}}^{\alpha}\big(\omega\big)\to b_{\bar{Q}_{\infty}}^{\alpha}\big(\omega\big)$
almost surely, with $b_{\bar{Q}_{\infty}}^{\alpha}\in B\big(\alpha,\bar{Q}_{\infty}\big)$,
which immediately results in $\big<b_{\bar{Q}_{n}}^{\alpha},X_{1}\big>\big(\omega\big)\to\big<b_{\bar{Q}_{\infty}}^{\alpha},X_{1}\big>\big(\omega\big)$
almost surely.\smallskip

Since $\sigma\big(X_{1}^{n-1}\big)=T^{(n-1)}\sigma\big(X_{2-n}^{0}\big)$,
the Bayesian M-V strategy $\big(b_{Q_{n}}^{\alpha}\big)\big(\omega\big)$
can be equivalently represented by the shifted portfolios $b_{\bar{Q}_{n}}^{\alpha}\big(\omega\big)$
as $b_{Q_{n}}^{\alpha}\big(\omega\big)=b_{\bar{Q}_{n}}^{\alpha}\big(T^{(n-1)}\omega\big)$
for all $n\geq2$, while $b_{Q_{1}}^{\alpha}\big(\omega\big)=b_{\bar{Q}_{1}}^{\alpha}\big(\omega\big)$
by definition. As a result, we obtain the following limit:
\begin{align*}
\lim_{n\to\infty}M_{n}\big(b_{Q_{n}}^{\alpha}\big)\big(\omega\big)=\lim_{n\to\infty}\frac{1}{n}\sum_{i=1}^{n}\big<b_{Q_{i}}^{\alpha},X_{i}\big>\big(\omega\big) & =\lim_{n\to\infty}\frac{1}{n}\sum_{i=1}^{n}\big<b_{\bar{Q}_{i}}^{\alpha},X_{1}\big>\big(T^{(i-1)}\omega\big)\\
 & =\mathbb{E}\big(\big<b_{\bar{Q}_{\infty}}^{\alpha},X_{1}\big>|\mathcal{I}\big),\text{ a.s},
\end{align*}
by Breiman's generalized ergodic theorem for the $L^{1}$-dominated
sequence $\big\{\big<b_{\bar{Q}_{n}}^{\alpha},X_{n}\big>\big\}_{n=1}^{\infty}$,
i.e., $\mathbb{E}\big(\sup_{n}|\big<b_{\bar{Q}_{n}}^{\alpha},X_{n}\big>|\big)<\infty$
as $|\big<b,X_{n}\big>|$ is bounded for any $n$ and $b\in\mathcal{B}^{m}$.
Subsequently, we have:
\begin{align*}
\lim_{n\to\infty}V_{n}\big(b_{Q_{n}}^{\alpha}\big) & =\lim_{n\to\infty}\frac{1}{n}\sum_{j=1}^{n}\Big(\big<b_{Q_{j}}^{\alpha},X_{j}\big>-\frac{1}{n}\sum_{i=1}^{n}\big<b_{Q_{i}}^{\alpha},X_{i}\big>\Big)^{2}\\
 & =\lim_{n\to\infty}\frac{1}{n}\sum_{j=1}^{n}\Big(\big<b_{\bar{Q}_{j}}^{\alpha},X_{1}\big>\big(T^{(i-1)}\omega\big)-\frac{1}{n}\sum_{i=1}^{n}\big<b_{\bar{Q}_{i}}^{\alpha},X_{1}\big>\big(T^{(i-1)}\omega\big)\Big)^{2}\\
 & =\lim_{n\to\infty}\frac{1}{n}\sum_{j=1}^{n}\big<b_{\bar{Q}_{j}}^{\alpha},X_{1}\big>^{2}\big(T^{(i-1)}\omega\big)-\lim_{n\to\infty}\Big(\frac{1}{n}\sum_{i=1}^{n}\big<b_{\bar{Q}_{i}}^{\alpha},X_{1}\big>\big(T^{(i-1)}\omega\big)\Big)^{2}\\
 & =\mathbb{E}\big(\big<b_{\bar{Q}_{\infty}}^{\alpha},X_{1}\big>^{2}|\mathcal{I}\big)-\mathbb{E}\big(\big<b_{\bar{Q}_{\infty}}^{\alpha},X_{1}\big>|\mathcal{I}\big)^{2},\text{ a.s}.\\
 & =\mathbb{E}\big(\big<b_{\bar{Q}_{\infty}}^{\alpha},X_{1}\big>-\mathbb{E}\big(\big<b_{\bar{Q}_{\infty}}^{\alpha},X_{1}\big>|\mathcal{I}\big)|\mathcal{I}\big)^{2}.
\end{align*}
Thus, the needed limit of the empirical M-V utility of the Bayesian
M-V strategy $\big(b_{Q_{n}}^{\alpha}\big)$ is obtained.\smallskip

To prove assertion (b), noting that since the process is time-reversible,
the sequence of regular conditional distributions $\big\{ Q_{n}\big\}_{n=1}^{\infty}$
can be equivalently represented as follows:
\[
Q_{1}\coloneqq\mathbb{\mathbb{P}}\big(X_{1}\big)=\mathbb{\mathbb{P}}\big(X_{0}\big)\text{ and }Q_{n}\coloneqq\mathbb{\mathbb{P}}\big(X_{n}|\sigma\big(X_{1}^{n-1}\big)\big)=\mathbb{\mathbb{P}}\big(X_{0}|\sigma\big(X_{1}^{n-1}\big)\big),\forall n\geq2.
\]
Hence, by invoking Levy's martingale convergence theorem as in (\ref{Levy}),
we obtain the weak convergence $Q_{n}\to Q_{\infty}$ almost surely,
where $Q_{\infty}\coloneqq\mathbb{\mathbb{P}}\big(X_{0}|\sigma\big(X_{1}^{\infty}\big)\big)$.
In addition, since time-reversibility also implies $\mathbb{\mathbb{P}}\big(X_{0}|\sigma\big(X_{1}^{\infty}\big)\big)=\mathbb{\mathbb{P}}\big(X_{0}|\sigma\big(X_{-\infty}^{-1}\big)\big)$
and $\mathbb{\mathbb{P}}\big(X_{-\infty}^{0}\big)=\mathbb{\mathbb{P}}\big(X_{-\infty}^{1}\big)$,
the corresponding covariance matrix $\Sigma^{Q_{\infty}}$ must be
positive definite almost surely, as the market is assumed not to have
redundant assets with respect to $\bar{Q}_{\infty}$ almost surely.
This results in the singleton set $B\big(\alpha,Q_{\infty}\big)$
and the almost sure convergence $b_{Q_{n}}^{\alpha}\big(\omega\big)\to b_{Q_{\infty}}^{\alpha}\big(\omega\big)$,
where $b_{Q_{\infty}}^{\alpha}\in B\big(\alpha,Q_{\infty}\big)$,
by Lemma \ref{Lemma 1}.\smallskip

Noting that the empirical distributions $P_{n}\to P_{\infty}$ weakly
almost surely, where the random limiting distribution is given by
$P_{\infty}\big(\omega\big)=\mathbb{\mathbb{P}}\big(X_{1}|\mathcal{I}\big)\big(\omega\big)$
due to Corollary \ref{Corollary 2}. Consequently, applying Lemma
\ref{lemma 2} yields that the empirical M-V utility of the M-V strategy
$\big(b_{Q_{n}}^{\alpha}\big)\big(\omega\big)$ converges almost surely
to that of the random constant M-V strategy $\big(b_{Q_{\infty}}^{\alpha}\big)\big(\omega\big)$,
i.e., the limiting portfolio $b_{Q_{\infty}}^{\alpha}\big(\omega\big)$,
which depends on the realization sequence of the process $\left\{ X_{n}\big(\omega\big)\right\} _{n=1}^{\infty}$,
as follows:
\begin{align}
\lim_{n\to\infty}\big(M_{n}\big(b_{Q_{n}}^{\alpha}\big)-\alpha V_{n}\big(b_{Q_{n}}^{\alpha}\big)\big)\big(\omega\big) & =\lim_{n\to\infty}\mathcal{L}\big(\alpha,b_{Q_{\infty}}^{\alpha},\mu^{P_{n}},\Sigma^{P_{n}}\big)\big)\big(\omega\big)\nonumber \\
 & =\mathcal{L}\big(\alpha,b_{Q_{\infty}}^{\alpha}\big(\omega\big),\mu^{P_{\infty}}\big(\omega\big),\Sigma^{P_{\infty}}\big(\omega\big)\big),\text{ a.s},\label{eq: theorem 2 proof}\\
 & \leq\max_{b\in\mathcal{B}^{m}}\mathcal{L}\big(\alpha,b,\mu^{P_{\infty}}\big(\omega\big),\Sigma^{P_{\infty}}\big(\omega\big)\big).\nonumber 
\end{align}
Accordingly, the dynamic Bayesian M-V strategy $\big(b_{Q_{n}}^{\alpha}\big)$
and the constant one $\big(b_{Q_{\infty}}^{\alpha}\big)$ also yield
the same limiting empirical Sharpe ratio and growth rate. All the
required properties are finally obtained, completing the proof.
\end{proof}
\begin{rem*}
The construction (\ref{Bayesian strategy}) of the Bayesian M-V strategy
can be extended to incorporate an adapted risk aversion coefficient
$\alpha_{n}=\alpha\big(Q_{n}\big)$, corresponding to the true conditional
distribution at each period, rather than a fixed $\alpha$. This implies
that each M-V model associated with a conditional distribution has
an appropriately chosen risk aversion. In this extended context, if
the risk aversion can be constructed as a function $\alpha\big(Q\big)$
that is continuous in $Q\in\mathcal{P}$, then both statements of
Theorem \ref{Theorem 2} remain valid.
\end{rem*}
As a consequence of Theorem \ref{Theorem 2}, for a given risk aversion,
when the time-reversible market process causes the Bayesian M-V strategy
$\big(b_{Q_{n}}^{\alpha}\big)$ to almost surely yield a suboptimal
M-V utility of a portfolio corresponding to the limiting empirical
distribution $P_{\infty}$, it may also result in a lower empirical
Sharpe ratio than that of the corresponding optimal M-V portfolio
located on the efficient frontier. Moreover, if the limiting empirical
distribution $P_{\infty}$ is normal, the asymptotic growth rate of
the Bayesian M-V strategy $\big(b_{Q_{n}}^{\alpha}\big)$ almost surely
tends to increase with a higher risk aversion coefficient $\alpha$,
up to a certain point, as argued by Proposition \ref{Normal mean variance tradeoff}.
Hence, under a time-reversible market process with some additional
conditions, the next Corollary \ref{Corollary 3} guarantees that
our proposed algorithms in previous sections, which do not require
estimations of the conditional distributions based on past observations
at any time, can construct an M-V strategy that almost surely yields
the empirical M-V utility and either the empirical Sharpe ratio or
growth rate, not lower than those yielded by the Bayesian M-V strategy.
\begin{cor}
Consider the market as a time-reversibility process as mentioned in
Theorem \ref{Theorem 2}, then, two following properties hold for
a given risk aversion $\alpha$:\label{Corollary 3} 
\begin{enumerate}
\item The limiting empirical M-V utility of the M-V strategy $\big(b_{n}^{\alpha}\big)$,
according to (\ref{Proposal strategy mean variance}), is almost surely
not lower than that of the Bayesian M-V strategy $\big(b_{Q_{n}}^{\alpha}\big)$
for any risk aversion $\alpha$ as:
\[
\lim_{n\to\infty}\big(M_{n}\big(b_{Q_{n}}^{\alpha}\big)-\alpha V_{n}\big(b_{Q_{n}}^{\alpha}\big)\big)\leq\lim_{n\to\infty}\big(M_{n}\big(b_{n}^{\alpha}\big)-\alpha V_{n}\big(b_{n}^{\alpha}\big)\big),\text{ a.s}.
\]
\item Consider a set of risk aversion $\mathcal{A}$, including the coefficient
$\alpha^{*}$, corresponding to the M-V portfolio $b^{\alpha^{*}}\in B\big(\alpha^{*},P_{\infty}\big)$,
which yields either the highest Sharpe ratio or the highest expected
logarithmic return with respect to $P_{\infty}$ among all $b\in\mathcal{B}^{m}$.
Depending on the choice of the objective function $\mathcal{U}\big(b,Q\big)$,
as defined in (\ref{Function U}), either the limiting empirical Sharpe
ratio or the limiting growth rate yielded by the M-V strategy $\big(b_{n}^{\alpha_{n}}\big)$,
constructed according to (\ref{proposal strategy optimal sharpe}),
is almost surely not lower than that of the Bayesian M-V strategy
$\big(b_{Q_{n}}^{\alpha}\big)$ as:
\[
\lim_{n\to\infty}Sh_{n}\big(b_{n}^{\alpha_{n}}\big)\geq\lim_{n\to\infty}Sh_{n}\big(b_{Q_{n}}^{\alpha}\big)\text{ or }\lim_{n\to\infty}W{}_{n}\big(b_{n}^{\alpha_{n}}\big)\geq\lim_{n\to\infty}W_{n}\big(b_{Q_{n}}^{\alpha}\big),\forall\alpha\in\mathcal{A},\text{ a.s.}
\]
\end{enumerate}
\end{cor}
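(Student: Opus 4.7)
The plan is to obtain both assertions by assembling results already available in the paper: Corollary~\ref{Corollary 2} to secure almost-sure weak convergence $P_n\to P_\infty$, Theorem~\ref{Theorem 1} and Corollary~\ref{Corollary 1} to identify the pathwise limiting performances of the proposed strategies at $P_\infty$, and Theorem~\ref{Theorem 2}(b) to control the Bayesian strategy from above by the corresponding M-V optimum at $P_\infty$. A preliminary technical step is to verify that the no-redundancy hypothesis stated in Theorem~\ref{Theorem 2} for $\bar Q_\infty$ propagates to $P_\infty$: since $\mathcal{I}\subseteq\sigma(X_{-\infty}^0)$, the law of total variance gives $\Sigma^{P_\infty}\succeq\mathbb{E}\big(\Sigma^{\bar Q_\infty}\mid\mathcal{I}\big)$ almost surely, so that positive-definiteness of $\Sigma^{\bar Q_\infty}$ transfers to $\Sigma^{P_\infty}$. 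This is what allows Theorem~\ref{Theorem 1} (and hence Corollary~\ref{Corollary 1}) to be invoked pathwise.

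For assertion (a), Theorem~\ref{Theorem 1} applied pathwise yields
\[
\lim_{n\to\infty}\big(M_n(b_n^\alpha)-\alpha V_n(b_n^\alpha)\big)=\mathcal{L}\big(\alpha,b^\alpha,\mu^{P_\infty},\Sigma^{P_\infty}\big)=\max_{b\in\mathcal{B}^m}\mathcal{L}\big(\alpha,b,\mu^{P_\infty},\Sigma^{P_\infty}\big),\text{ a.s.,}
\]
while Theorem~\ref{Theorem 2}(b) gives that the same quantity is an almost-sure upper bound for $\lim_n\big(M_n(b_{Q_n}^\alpha)-\alpha V_n(b_{Q_n}^\alpha)\big)$. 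Transitivity yields the claimed inequality.

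For assertion (b), Corollary~\ref{Corollary 1} applied pathwise asserts that $(b_n^{\alpha_n})$ achieves almost surely the limit $\mathcal{U}\big(b^{\alpha^*},P_\infty\big)$, where by the defining hypothesis on $\alpha^*$ the portfolio $b^{\alpha^*}$ is a global maximizer of $\mathcal{U}(\cdot,P_\infty)$ over $\mathcal{B}^m$. On the Bayesian side, the calculation in the proof of Theorem~\ref{Theorem 2}(b) culminating in (\ref{eq: theorem 2 proof}) already establishes that the dynamic strategy $(b_{Q_n}^\alpha)$ and the constant surrogate $(b_{Q_\infty}^\alpha)$ share the same almost-sure limiting empirical Sharpe ratio and growth rate; these limits are therefore equal to $\mathcal{U}\big(b_{Q_\infty}^\alpha,P_\infty\big)$ with $b_{Q_\infty}^\alpha\in\mathcal{B}^m$. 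Combining the two chains of limits with the global optimality of $b^{\alpha^*}$ over $\mathcal{B}^m$ under $\mathcal{U}(\cdot,P_\infty)$ delivers the required inequality in both the Sharpe-ratio and the growth-rate cases.

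The only genuinely delicate point is the propagation of no-redundancy from $\bar Q_\infty$ to $P_\infty$ flagged above: without it, neither the uniqueness of the limiting M-V portfolio supplied by Theorem~\ref{Theorem 1} nor the strict positivity of the Sharpe-ratio denominator in Corollary~\ref{Corollary 1} is guaranteed, and the pathwise identification of all three convergences $b_n^\alpha\to b^\alpha$, $b_n^{\alpha_n}\to b^{\alpha^*}$, $b_{Q_n}^\alpha\to b_{Q_\infty}^\alpha$ would break. Once that technical obstacle is cleared, Lemma~\ref{lemma 2} transfers these portfolio-level convergences into the empirical mean, variance and growth statements on both sides of the comparison, and the proof reduces to the bookkeeping sketched above.
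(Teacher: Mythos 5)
Your proposal is correct and follows essentially the same route as the paper's proof: propagate the no-redundancy condition from $\bar{Q}_{\infty}$ to $P_{\infty}$, then combine Theorem~\ref{Theorem 1} (resp.\ Corollary~\ref{Corollary 1}) pathwise with the upper bound on the Bayesian strategy from Theorem~\ref{Theorem 2}(b) and the identification of its limit as $\mathcal{U}\big(b_{Q_{\infty}}^{\alpha},P_{\infty}\big)$ via (\ref{eq: theorem 2 proof}). The only (harmless) divergence is in the propagation step, where you use the law of total variance $\Sigma^{P_{\infty}}\succeq\mathbb{E}\big(\Sigma^{\bar{Q}_{\infty}}\mid\mathcal{I}\big)$ while the paper argues via full-dimensionality of the support of $\mathbb{P}\big(X_{1}|\mathcal{I}\big)$; both establish the required positive definiteness.
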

\begin{proof}
Recalling that the assumption of non-redundant assets for assertion
(b) of Theorem \ref{Theorem 2} implies that the covariance matrix
$\Sigma^{Q_{\infty}}$ corresponding to the limiting conditional distribution
$Q_{\infty}\coloneqq\mathbb{\mathbb{P}}\big(X_{0}|\sigma\big(X_{1}^{\infty}\big)\big)$
is positive definite, as $Q_{\infty}$ has full-dimensional support,
almost surely. Thus, by taking the expectation of $Q_{\infty}$ conditioned
on the $\sigma$-field $\mathcal{I}\subseteq\sigma\big(X_{1}^{\infty}\big)$,
we have $\mathbb{E}\big(\mathbb{\mathbb{P}}\big(X_{0}|\sigma\big(X_{1}^{\infty}\big)\big)|\mathcal{I}\big)=\mathbb{\mathbb{P}}\big(X_{0}|\mathcal{I}\big)=\mathbb{\mathbb{P}}\big(X_{1}|\mathcal{I}\big)$
almost surely, which implies that the support of the limiting empirical
distribution $P_{\infty}=\mathbb{\mathbb{P}}\big(X_{1}|\mathcal{I}\big)$,
according to Corollary \ref{Corollary 2}, is full-dimensional. Consequently,
the corresponding covariance matrix $\Sigma^{P_{\infty}}$ is positive
definite almost surely.\smallskip 

To show assertion (a), since the set $B\big(\alpha,P_{\infty}\big)$
associated with $\Sigma^{P_{\infty}}$ is a singleton almost surely,
Theorem \ref{Theorem 1} ensures that the M-V strategy $\big(b_{n}^{\alpha}\big)$,
formed according to (\ref{Proposal strategy mean variance}), has
an empirical M-V utility asymptotically approaching the utility of
the M-V portfolio $b^{\alpha}\in B\big(\alpha,P_{\infty}\big)$. Consequently,
the following inequality holds almost surely by applying Theorem \ref{Theorem 2}
to a time-reversible process:
\begin{align*}
\lim_{n\to\infty}\big(M_{n}\big(b_{n}^{\alpha}\big)-\alpha V_{n}\big(b_{n}^{\alpha}\big)\big) & =\max_{b\in\mathcal{B}^{m}}\mathcal{L}\big(\alpha,b,\mu^{P_{\infty}},\Sigma^{P_{\infty}}\big)\\
 & \geq\lim_{n\to\infty}\big(M_{n}\big(b_{Q_{n}}^{\alpha}\big)-\alpha V_{n}\big(b_{Q_{n}}^{\alpha}\big)\big),\text{ a.s.}
\end{align*}
Thus, assertion (a) is confirmed; however, it is important to stress
that the above inequality does not exclude the possibility that the
Bayesian M-V strategy $\big(b_{Q_{n}}^{\alpha}\big)$ could yield
a higher empirical Sharpe ratio or growth rate than that yielded by
the M-V strategy $\big(b_{n}^{\alpha}\big)$.\smallskip 

The proof for assertion (b) follows as a consequence. Specifically,
we use the convergence of the sequences $\big\{ M_{n}\big(b_{Q_{n}}^{\alpha}\big)\big\}_{n=1}^{\infty}$
and $\big\{ V_{n}\big(b_{Q_{n}}^{\alpha}\big)\big\}_{n=1}^{\infty}$
as established in (\ref{eq: theorem 2 proof}), which follows from
the almost sure convergence $b_{Q_{n}}^{\alpha}\to b_{Q_{\infty}}^{\alpha}$.
Furthermore, the set $\mathcal{A}$ is assumed to include the coefficient
$\alpha^{*}$, with $b^{\alpha^{*}}$ yielding either the highest
Sharpe ratio or the highest expected logarithmic return with respect
to $P_{\infty}$ among all portfolios $b\in\mathcal{B}^{m}$. Therefore,
the M-V strategy $\big(b_{n}^{\alpha_{n}}\big)$, constructed according
to (\ref{proposal strategy optimal sharpe}), results in the inequality
below if the function $\mathcal{U}\big(b,Q\big)$, as defined in (\ref{Function U}),
is chosen as the expected logarithmic return:
\[
\lim_{n\to\infty}\big(W{}_{n}\big(b_{n}^{\alpha_{n}}\big)-W_{n}\big(b_{Q_{n}}^{\alpha}\big)\big)=\mathbb{E}^{P_{\infty}}\big(\log\big<b^{\alpha^{*}},X\big>\big)-\mathbb{E}^{P_{\infty}}\big(\log\big<b_{Q_{\infty}}^{\alpha},X\big>\big)\geq0,\forall\alpha\in\mathcal{A},\text{ a.s.}
\]
Instead, if the function $\mathcal{U}\big(b,Q\big)$ is chosen as
the Sharpe ratio, the following inequality holds:
\[
\lim_{n\to\infty}\big(Sh_{n}\big(b_{n}^{\alpha_{n}}\big)-Sh_{n}\big(b_{Q_{n}}^{\alpha}\big)\big)=\frac{\mathbb{E}^{P_{\infty}}\big(\big<b^{\alpha^{*}},X\big>\big)}{\sqrt{\smash[b]{\Var^{P_{\infty}}\big(\big<b^{\alpha^{*}},X\big>\big)}}}-\frac{\mathbb{E}^{P_{\infty}}\big(\big<b_{Q_{\infty}}^{\alpha},X\big>\big)}{\sqrt{\smash[b]{\Var^{P_{\infty}}\big(\big<b_{Q_{\infty}}^{\alpha},X\big>\big)}}}\geq0,\forall\alpha\in\mathcal{A},\text{ a.s.}\vspace{0.7ex}
\]
Hence, we obtain all the needed results and complete the proof.
\end{proof}
\begin{rem*}
To invoke assertion (b) of Corollary \ref{Corollary 3}, the set $\mathcal{A}$
must include the risk aversion $\alpha^{*}$ as required by the conditions
of the statement. Clearly, the coefficient $\alpha^{*}$ is unknown
to the investor since the random limiting distribution $P_{\infty}$
is unknown. However, assume that the true $\alpha^{*}$ is not too
large; then, by considering $\mathcal{A}$ as a discretization of
an interval $\left[0,a\right]$ that can include the theoretical $\alpha^{*}$
with a sufficiently large $a$, the resulting M-V strategy $\big(b_{Q_{n}}^{\alpha}\big)$
can attain near-optimal limiting performance compared to that yielded
by the true constant M-V strategy $\big(b^{\alpha^{*}}\big)$ as stated.
\end{rem*}

\section{Concluding remarks and potential extensions}

In this paper, we propose an online learning approach to address the
long-standing problem of the Markowitz optimization enigma, which
describes the challenge posed by unknown parameters corresponding
to a distribution in the M-V model. The algorithms for constructing
the dynamic strategy guarantee asymptotically approaching the same
empirical M-V utility and other empirical performances, including
the Sharpe ratio and growth rate, as the constant M-V strategy derived
from the M-V model with knowledge of the distribution of assets' returns.
The proposed strategies work in markets without redundant assets under
a minimal summability condition for the infinite data sequence, which
is very general and thus covers a broad range of stochastic markets,
including i.i.d. and stationary processes. Moreover, if the stochastic
market is time-reversible and does not have redundant assets with
respect to the conditional distribution of assets' returns given the
infinite past, the Bayesian M-V strategy using the true conditional
distribution, based on past observations, does not yield a higher
limiting empirical M-V utility, Sharpe ratio, or growth rate than
the proposed strategies, almost surely. Additionally, the relationship
between the Sharpe ratio, expected return, and growth rate through
risk aversion calibration is demonstrated in a market with normally
distributed assets' returns, which also explains the behavior of the
growth rate of the proposed strategies in this specific type of market.\smallskip

We remark on some potential model extensions for further theoretical
and empirical research. The M-V model could be considered with another
risk measure rather than the variance of the portfolio's return, as
it is argued that variance may not be an appropriate risk measure
for heavy-tailed distributions, while also considering the volatility
of increasing returns as risky. For example, the M-V utility can be
modified with higher-order moments of portfolio returns other than
their variance or by considering the variance of the difference in
returns with a market portfolio; in these cases, both the analysis
and proposed strategies in this paper are expected to remain valid.
Besides, it is worth noting that the no-short sell restriction may
be removed, albeit with the trivial caveat that capital can be depleted
to zero in sequential investments. Alternatively, in accordance with
the common framework of online convex optimization, in the general
case of deterministic data with the removed summability condition,
some theoretical computer scientists may find it interesting to bound
the difference of the empirical M-V utility with the best constant
M-V strategy over time to achieve Hannan consistency as in \citet{Hannan1957}.
As mentioned in the paper, the M-V utility provides an alternative
loss function to the mainstream logarithmic return, and it likely
does not lend itself trivially to linear approximation as a common
approach. This problem becomes more challenging if it is required
to simultaneously achieve multi-objectives, including optimal empirical
Sharpe ratio and growth rate as in the paper.

\pagebreak\addcontentsline{toc}{Section}{Bibliography}

\end{document}